\documentclass[a4paper]{article}
\usepackage{amsmath,amsfonts,amsthm,amssymb,amsbsy,
amstext,amscd,amsxtra,multicol}
\usepackage{graphicx}
\def\mpfile#1#2{\includegraphics{#1-#2-mps.pdf}} 

\title{Distributed protocols for spanning tree construction and
leader election} 
\author{I.~M.~Khuziev\thanks{Moscow Institute of Physics and
    Technology (State University). Partially supported
by the RFBR grant 14--01--00641}\and M.~N.~Vyalyi\thanks{Dorodnicyn Computing Centre, FRC CSC RAS;
Moscow Institute of Physics and Technology (State University);
National Research University Higher School of Economics.
The~study has been funded by the Russian Academic Excellence Project `5-100'}} 
\date{}

\newtheorem{theorem}{Theorem}
\newtheorem{lemma}{Lemma}

\newtheorem{prop}{Proposition}

\newtheorem{cor}{Corollary}
\theoremstyle{definition}
\newtheorem{definition}{Definition}
\newtheorem{remark}{Remark}
\let\ld\lambda

\def\mess{\mathop{\mathrm{message}}\nolimits}
\def\lcp{\mathop{\mathrm{lcp}}\nolimits}
\def\m{{\boldsymbol m}}

\def\bin{\mathop{\mathrm{bin}}\nolimits}

\def\lseq{\preceq_{\text{\upshape\tiny PL}}}
\def\ls{\prec_{\text{\upshape\tiny PL}}}
\def\gs{\succ_{\text{\upshape\tiny PL}}}
\def\lex{\prec_{\text{\upshape\tiny lex}}}

\def\root{\mathop{\mathtt{root}}\nolimits}

\begin{document}
\maketitle
\begin{abstract}
  We present fast deterministic distributed protocols in synchronous
  networks for leader election and spanning tree construction.  The
  protocols are designed under the assumption that nodes in a network
  have identifiers but the size of an identifier is unlimited. So time
  bounds of protocols depend on the sizes of identifiers.
  We present fast protocols running in time $O(D\log
  L+L)$, where $L$ is the size of the minimal
  identifier and $D$ is the diameter of a network.
\end{abstract}

We study deterministic distributed protocols for leader election
and 
 spanning tree construction in a synchronous network. These
problems play important role in distributed computation theory (e.g., see
books~\cite{L96, W14}).

It is well-known that for  anonymous networks (processors are
indistinguishable) there are no deterministic protocols solving these
problems. So we assume that processors in a network (\emph{nodes})
have unique names (\emph{identifiers}). Under this assumption, a
natural way for 
leader election is \emph{the minimal identifier
  broadcast}: if all nodes in a network know the value of the minimal
identifier then the node having this identifier is elected.

The previous study of
these problems   was based on the assumption that the
identifiers are rather short (say, the nodes are numbered in the range
$[1,\dots,V]$, where $V$ is the size of a network)  and messages are
rather long (the message size is $O(\log V)$). In these settings 
near-optimal protocols are known for these problems and  for a more
general problem of minimum spanning tree construction~\cite{PR00, W14}.

We address to an intermediate situation between the anonymous case and the
well-numbered case. Namely, we assume that the nodes have 
identifiers but there is no a priori bound on the size of the
identifiers. Also we restrict the communication speed by $O(1)$ bit
per 
message. We call this model \emph{the unbounded identifier model}.
Here we consider deterministic protocols only and do not analyze possible
errors in the process of information transmission.

In this model bounds for the bit complexity of protocols (the total number of
bits sent) are known.  For asynchronous ring and
chain  networks rather close upper and lower  bounds of the bit
complexity  were obtained in~\cite{DSR08}. In~\cite{KhV15} 
protocols with near-optimal bit complexity bounds were presented  for
arbitrary synchronous networks. More exactly, for each
monotone unbounded function   $g(\cdot)$ there is a 
 protocol that constructs a spanning tree and sends $O(E g(V))$ bits,
where $E$ is the number of links in a network, $V$ is the number of
nodes. These protocols construct rooted trees and the root is the node
with the minimal identifier. So the protocols also elect a leader
in a network.

In this paper we are interested in time bounds of the protocols in the
unbounded identifier model.
In the unbounded identifier model the running time of a protocol
depends  on the  sizes of 
identifiers. We denote by $L$  the length of the shortest identifier.
(We always assume that a shorter identifier is lesser.)

Straightforward modifications of known protocols  for the
unbounded identifier model run in time  $O(DL)$.
Here
$D$ is the diameter of a~network. We
present better protocols. 

We present the message terminating protocol and the processor terminating
protocol for leader election and spanning tree construction.  The
definitions for these types of protocols are taken from the
paper~\cite{IR}.  The message termination means that at some moment of
time all nodes are in a \emph{sleep state}. In such a~state a~node does
not send  messages but   can change the state after receiving a
suitable message. The processor termination means that at some moment
of time all nodes are in the \emph{final state}. This state cannot be
changed by any message. In other words, a node in the final state is
off-line. Processor terminating is a stronger property. It is
important, e.g., for compositions of protocols.

The following two results are our main contribution.

\begin{theorem}\label{th:m-t-prot}
  There exists a message terminating protocol that broadcasts the minimal
identifier and  constructs a rooted spanning tree
in time $O(D\log L+L)$, where $L$ is the length of the
minimal identifier and $D$ is the diameter of the network.
\end{theorem}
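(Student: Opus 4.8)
The plan is to separate the task into two conceptually independent stages whose costs add up to the claimed bound: a length-discovery stage costing $O(D\log L)$ and a bit-pipelined broadcast of the winning identifier costing $O(D+L)$. The reason for the split is that once every node knows the exact value $L$ of the minimal identifier length, the candidates for the minimal identifier are precisely the identifiers of length exactly $L$, so comparing them reduces to ordinary lexicographic comparison of equal-length binary strings; it is this equal-length case that pipelines cleanly, whereas mixing lengths in $\ls$ forces the streamed comparison to handle the ``shorter is smaller'' rule separately.

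First I would determine $L$ by exponential search followed by binary search. For a threshold $t$, a single broadcast-and-convergecast wave computes the boolean ``some node has $\len(\mathrm{id})\le t$'' and delivers it to every node in $O(D)$ rounds, since it is the disjunction of one-bit local values propagated across the network; the echo of the wave also signals its own completion, so no a~priori knowledge of $D$ is required. Running this for $t=1,2,4,\dots,2^k$ locates the first power of two with $2^{k-1}<L\le 2^k$, and a subsequent binary search inside the interval $(2^{k-1},2^k]$ pins down $L$ exactly. Both searches use $O(\log L)$ waves of cost $O(D)$ each, for a total of $O(D\log L)$, after which all nodes agree on $L$, and hence on the set of live candidates and on when the forthcoming bit stream must end.

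In the second stage every node whose identifier has length $L$ injects it as a candidate and all nodes cooperatively flood the $\ls$-minimum, emitting it from the most significant bit. The crucial point is that a node never retracts a bit it has already forwarded: it maintains the minimum candidate prefix it has so far certified, and whenever a strictly smaller candidate $B$ overtakes the current candidate $A$, the switch occurs at a position beyond $\lcp(A,B)$, so $B$ agrees with everything already sent and the node simply continues the stream with $B$'s bits. A node commits its next output bit only once it is guaranteed that no smaller candidate arriving from elsewhere can override that position; since information crosses the network in $O(D)$ rounds this commit delay is $O(D)$, and with the $L$ bits pipelined the whole stream reaches every node in $O(D+L)$ rounds. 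Each node records as its parent the neighbour from which the eventual minimum arrived, yielding a rooted spanning tree whose root is the minimal-identifier node; once the last bit has stabilised everywhere the nodes enter $\sleep$, giving message termination.

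The main obstacle I expect is the correctness and timing analysis of this monotone, no-backtracking pipeline. One must prove the invariant that any bit a node has committed is a genuine prefix of the global $\ls$-minimum $m^*$, show that the commit rule never stalls a node longer than $O(D)$ rounds, and verify that the induced parent pointers are acyclic and span the network. Establishing the no-backtracking invariant through the $\lcp$ structure of the equal-length candidates, and bounding the commit delay by the diameter, is the technical heart of the argument; the length-discovery stage is routine by comparison.
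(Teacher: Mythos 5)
Your plan founders on a chicken-and-egg problem that is precisely the crux of this model: nodes do not know $D$ (nor the number of nodes), and nothing in your proposal explains how a node could act on the condition ``$O(D)$ rounds have elapsed.'' Concretely, in the length-discovery stage, consider a threshold $t$ for which the answer to ``some node has $\len(\mathrm{id})\le t$'' is \emph{no}. Then there is no initiator at all: nothing is ever sent, the network stays silent, and a silent history at a node is compatible both with ``no witness exists'' and with ``a witness exists, but at distance larger than the time elapsed so far.'' Hence no node can ever conclude ``no'' at a locally recognizable moment, and the echo trick does not help, because an echo needs an initiator or a pre-existing rooted tree --- exactly what is missing in the negative case. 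Broadcast-and-convergecast with termination detection presupposes a leader, a rooted spanning tree, or a known bound on $D$ or on the network size; that is, it presupposes a solution to the very problem being solved. The same gap reappears in your second stage: the commit rule ``emit a bit only once no smaller candidate can override it'' can only be implemented by waiting a number of rounds at least the node's eccentricity, which the node cannot recognize. If it commits earlier, it may emit wrong bits, your no-backtracking invariant fails, and you are forced into retractions --- which is what the paper's protocol in fact embraces.

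This is why the paper's proof looks quite different: it never separates length discovery from minimum finding and never requires any node to positively know an intermediate global fact. Identifiers are re-encoded as keys $K(x)=1^{k+1}0^{2^k-|x|}1x$, whose prefix encodes the length, so that finding the minimal length and the minimal identifier happen in one interleaved flooding with respect to the partial lexicographic order; bits are sent optimistically, and when a node discovers that a prefix it already forwarded is wrong, it sends an $O(\log L)$-bit \emph{correction message} (the binary length of the still-valid prefix) instead of restarting. The delay analysis (Lemma~\ref{delay-lemma}) shows that each hop adds at most $|\bin(L)|=O(\log L)$ of delay, yielding the bound $O(D\log L+L)$. Note also that the resulting protocol is only \emph{message} terminating --- nodes fall silent without ever knowing the process has ended --- which is a symptom of the same obstruction: the positive local knowledge your stages rely on (the exact value of $L$, completion of a wave) is unattainable at this point, and acquiring such knowledge is deferred to the processor terminating protocol of Theorem~\ref{th:p-t-prot}, which itself needs the spanning tree produced by Theorem~\ref{th:m-t-prot} first.
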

\begin{theorem}\label{th:p-t-prot}
  There exists a processor terminating protocol that broadcasts the
  minimal identifier and constructs a rooted spanning tree in time
  $O(D\log L+L)$, where $L$ is the length of the minimal identifier and
  $D$ is the diameter of the network.
\end{theorem}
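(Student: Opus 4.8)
The plan is to prove Theorem 2 (the processor-terminating version) by reducing it to Theorem 1. Since Theorem 1 already gives a message-terminating protocol with the desired time bound, the main task is to convert message termination into processor termination while preserving the $O(D\log L+L)$ running time. The standard obstacle here is that a node reaching a sleep state cannot locally decide whether the \emph{entire} network has quiesced — it only knows that it has stopped receiving messages. So I need a mechanism for the root to detect global termination and then to broadcast a final "shut down" signal along the already-constructed spanning tree.

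Let me think about this more carefully.

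The excerpt tells me Theorem 1 produces a message-terminating protocol that (a) broadcasts the minimal identifier, (b) builds a rooted spanning tree with the minimal-identifier node as root, all in time $O(D\log L+L)$. The natural strategy for Theorem 2 is a two-phase approach: first run the Theorem 1 protocol, then run a termination-detection / confirmation phase over the spanning tree. The second phase is a classic convergecast-then-broadcast: leaves report completion up the tree to the root, the root learns the whole tree has stabilized, and then the root broadcasts a termination wave down the tree, after which every node enters the final (off-line) state.

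The key steps in order would be: (1) Run the message-terminating protocol of Theorem 1; at the end every node is in a sleep state, knows its parent pointer in the tree, and (crucially) the correct value of the minimal identifier. (2) Observe that the depth of the tree is at most the diameter $D$, and that a node can tell whether it is a leaf and can count its children in the tree (children are exactly the neighbors that selected it as parent; in a synchronous model this can be arranged by having each node announce its parent choice to its neighbors). (3) Implement a convergecast: a node sends a "done" token to its parent once it has received "done" tokens from all its children; leaves start immediately. This wave reaches the root in $O(D)$ additional rounds. (4) The root, upon collecting "done" from all its children, initiates a broadcast wave of a "terminate" signal back down the tree; upon receiving it, each node forwards it to its children and then transitions irrevocably to the final state. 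This second wave also costs $O(D)$ rounds. The total extra time is $O(D)$, so the overall bound remains $O(D\log L+L)$.

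The hard part will be ensuring the convergecast starts only after the whole tree has truly stabilized, not merely after a given node falls silent — a node might be sleeping temporarily while a correction message is still propagating elsewhere. I would address this by exploiting the structure of the Theorem 1 protocol: since its running time is bounded by $O(D\log L+L)$, each node can be equipped with a round counter and begin the convergecast only after a fixed, known time bound has elapsed, guaranteeing the tree is final. (This requires nodes to know, or to learn via the broadcast, an upper estimate on $D$ and $L$; $L$ is obtained from the length of the broadcast minimal identifier, and an estimate of $D$ can itself be piggybacked, computed during the convergecast as the maximum subtree depth.) The secondary subtlety is a careful synchronous round-accounting argument to confirm that both waves fit inside $O(D)$ rounds and that no node re-enters an active state after termination; I expect this bookkeeping, rather than any deep idea, to be the bulk of the formal proof.
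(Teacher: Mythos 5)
Your overall shape --- an echo/convergecast up the constructed tree followed by a downward termination wave --- is indeed the idea behind the paper's finalizing protocol, but the mechanism you use to decide \emph{when} the convergecast may safely begin has a genuine gap, and that decision is precisely the hard part of the theorem. You propose that each node wait out a timeout matching the $O(D\log L + L)$ bound before starting the convergecast. In this model no such timeout is computable: nodes have no a priori knowledge of $D$ (or of $V$), and the quantity you suggest recovering $D$ from --- the maximum subtree depth collected \emph{during} the convergecast --- is only meaningful once the tree is known to be final, which is exactly what the timeout was supposed to certify; the dependency is circular. Worse, no timeout rule based on locally available data can work at all: fix any such rule (a function of elapsed time and of the locally observed minimal key), and consider a network in which a cluster of nodes quickly agrees on a locally minimal key while the node with the true minimal identifier sits at distance larger than the cluster's resulting timeout. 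The cluster falls silent, its convergecast completes, its root's depth estimate reflects only the small cluster, and its nodes enter the final state holding the wrong ``minimal'' identifier. Since the final state is irrevocable, the protocol is incorrect. For the same reason your phase separation (``first run the Theorem 1 protocol to completion, then run phase 2'') is not implementable: deciding that phase 1 has ended \emph{is} the termination-detection problem you set out to solve.

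The paper resolves this without any timeout: the flooding protocol ($A$), a parent-pointer maintenance protocol ($B$, messages ``I'm a child''/``I'm not a child''), and the echo protocol ($C$) are interleaved and run \emph{concurrently} on the \emph{dynamic} forest $\Gamma(t)$, which may still be changing while confirmations circulate. Confirmations may therefore be sent prematurely, so protocol $C$ allows retraction: whenever a node's candidate decreases or it receives ``I'm not a child'', it discards the confirmations it has collected. The substantive content of the proof is then a case analysis showing that a root whose dynamic tree $\Gamma(t,K)$ is not spanning can never hold a full set of valid confirmations, so a termination message is issued only when the tree is genuinely spanning and every node holds the minimal key. A secondary inaccuracy in your write-up: the depth of the constructed tree is not bounded by $D$ (the key need not propagate along shortest paths); the paper bounds the depth by the running time $O(D\log L + L)$ of protocol $A$, which is weaker but still suffices for the final asymptotics.
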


These protocols are modifications of flooding and echo protocols
designed for the well-numbered model. To speed up a flooding protocol
in the unbounded identifier model, we introduce a suitable encoding of
identifiers (see Section~\ref{id->key} below) and add a correction
phase to the process of information transmission.

Note also that the protocols presented in this paper remain correct
in the asynchronous case. But time bounds are valid for the
synchronous case only.

The rest of paper is organized as follows. In Section~\ref{defs} we
give formal definitions of our model and the problems solved by
protocols. Section~\ref{informal} contains an overview of our
construction. 
Section~\ref{min-broadcasting} contains the proof of
Theorem~\ref{th:m-t-prot}: a~detailed description of the
message terminating protocol that broadcasts the minimal identifier in a
network, the proof of  correctness, the upper  bound of the running
time and  spanning tree construction based on  this protocol.
In Section~\ref{finalizing} we present the proof of
Theorem~\ref{th:p-t-prot} based on a suitable modification of an echo
protocol combined with the previous protocol that broadcasts the
minimal identifier.

\section{Definitions}\label{defs}

The unbounded identifier model were introduced in~\cite{KhV15}. Here
we adopt this model with minor changes in formal definitions. 

We consider point-to-point synchronous networks and distributed
protocols in them. Nodes of a network communicate by sending messages
through bi-directed links. All nodes perform the same communication
protocol. 
Communication
speed is bounded  by $O(1)$. The nodes are distinguishable but
the amount of information to distinguish them may be arbitrary large.

In this section  we provide formal definitions for this model. 

A \emph{network} $G(V,E)$ is a connected graph, where $V$ is the set
of nodes and $E$ is the set of bi-directed links.  The graph is
equipped by a set of functions. An injective function $I\colon V\to
\{0,1\}^*$ assigns the unique identifier $I(v) $ to a node~$v$.  Links
of each node are ordered by index functions $n_v$. Let $d(v)$ be a
degree of a node in $G(V,E)$. Then the range of the index function
$n_v\colon\{1,\dots,d(v)\}\to E$ is exactly the set of edges incident
to $v$ in $G(V,E)$.

We assume that initially a node has only local information about a network. It
knows the number of its links (the degree $d(v)$),
can distinguish
links by   its index function, and knows its identifier. 

This model uses discrete time. Time moments are numbered by
nonnegative integers. At a moment~$t$ a node $v$ receives messages
from all its neighbors. The messages are bit strings, their lengths
are bounded by a constant specified by a protocol. The empty string
$\ld$ is allowed. It indicates absence of a~message.

The list of messages received by the node  $v$ at a moment~$t$ is
denoted as
\[
\m_t(v) = (m_1(t,v), \dots, m_{d(v)}(t,v)), 
\]
the messages are ordered by the index function.

Usually states of nodes are used in protocol descriptions. But it is
unnecessary and we avoid states in the definitions. 
One may take a~history of communication in a node as a state of a node.
\emph{A~history of
communication} $H(t,v)$ in a node $v$ at a moment $t$ is the list
$(I(v),\m_1(v), \dots,\m_t(v)) $. We assume that messages depend on
the history of communication only. (So the protocols are deterministic.)

By definition, a \emph{protocol} is the message function $\mess(t,H)$ 
that determines the messages sent during the current round of
communication. The first   argument in the message function is an index of a link in
the link enumeration. The second argument is the history of communication.

The message function may be arbitrary. It means that
nodes are computationally unlimited.  Nevertheless, it is worth to
mention that in the protocols presented in this paper the message
functions are easy to compute.

Initially, at time $t=0$, 
the history of
communication in a node $v$ is $H(v,0)=(I(v))$. 
Then the configuration of the network is changed according the following
rules:
\begin{equation}\label{evolution}
  \begin{aligned}
    m_{i}(t,v)&= \mess( j, H(t-1,s)),\\
    H(t,v)& = (H(t-1,v), (m_{i}(t,v): 1\leq i\leq d(v))),
  \end{aligned}
\end{equation}
where the link $(s,v)\in E$ has the index $i$ in the node $v$ and the
index $j$ in the node $s$. (Recall that no communication errors are
allowed in this model.)

Note that the rules are the same for all nodes in the network. But
 behavior of a node may depend on a
moment of time due to the synchronization assumption.

We say that a node $v$ is in the
\emph{final state} at moment~$t$, if
\[
  \mess(i,(H(t,v), H'))=\ld
\]
for all  $i$, $H'$. 
It means that in the final state a node is not
communicating with 
a network: it sends nothing and ignores all incoming messages.

We say that a node $v$ is in 
 a \emph{sleep} state at moment~$t$, if 
\[
  \mess(i,(H(t,v),\Lambda,\Lambda,\dots,\Lambda )=\ld,
\]
where $\Lambda$ is the list composed of the empty strings. 
Thus a node in a sleep
state does not generate messages and does not change its state if
it has not received messages from its neighbors at the previous moment
of time. It implies that communication finishes if all nodes are in
sleep states. But a node in a sleep state can be `awaken' by a
suitable message. 

\begin{definition}\upshape
A protocol is \emph{message terminating} if for any network $G(V,E)$,
 $I(v)$,  $n_v(i)$  
all nodes are in a sleep state 
at some moment~$t$.
\end{definition}

\begin{definition}\upshape
A protocol is \emph{processor terminating} if for any $G(V,E)$,
$I(v)$,  $n_v(i)$  
 all nodes are in the final state  at some moment~$t$.
\end{definition}

The running time $T$ of the protocol in the network $G(V,E)$, $I(v)$,
 $n_v(i)$
 is the first
moment of time when all nodes are in the final (a sleep) state. 

Now we are going to define a useful result of a
protocol. Loosely speaking,  it is an arbitrary function of a history of
communication. 

We say that a protocol $\mess(i,H)$  \emph{broadcasts the minimal
identifier} if it terminates in any network
and there is a function $M(H)$ such that
$M(H(v,T))=I_{\min}$ for all $v\in V$. Here $T$ is the running time of
the protocol and $I_{\min}$ is the minimal identifier in the network
with respect to \emph{the shortlex order} $\lex$: compare strings by
the length and in the case of equal lengths compare them
lexicographically.

We say that a protocol $\mess(i,H)$ \emph{elects a leader} if it
terminates in any network and there is a function $\ell(H)$ such that
(i) $\ell(H)\in\{0,1\}$ and (ii) after the protocol finishes, there is
exactly one node (the leader) with the value~$\ell=1$:
$\ell(H(w,T))=1$ for the leader $w$ and $\ell(H(v,T))=0$ for all $v\ne w$.

It follows from the definitions that if a protocol broadcasts the minimal
identifier then it also elects a leader: the value  $\ell(v)$ is
determined by  $I_{\min}$ and $I(v)$ in the straightforward way.

We say that a protocol $\mess(i,H)$ \emph{constructs a rooted
spanning tree} if it terminates  in any network and there is a function
$\pi(H)$ such that (i) $\pi(H(v,T))\in V\cup\{\root\}$; (ii) there is
exactly one node $r$ such that $\pi(H(r,T))=\root$; (iii) for all
$v\in V$ if $\pi(H(v,T))\ne\root$ then $(v, \pi(H(v,T)))\in E$ and
(iv) the links $(v, \pi(H(v,T)))$, where $\pi(H(v,T))\ne\root$, form a
spanning tree in the network.
 
\begin{remark}
  The definition of protocols constructing a spanning tree looks rather
  artificially. It is used to simplify  formal arguments. Note in advance that
  an orientation from a child to a parent appears in our protocols
  naturally as well as the list of childs. So we skip the list in
  the definition. 

  Speaking of rooted trees, we always assume edge orientation towards
  the root.
\end{remark}

\section{An overview of the protocols}\label{informal}

In this section we discuss informally the protocols presented below.

The basic idea is very simple: to broadcast information as fast as
possible. Once a~node obtains some useful information, it immediately
resends it to all its neighbors.

We illustrate the idea in a simpler situation.  Suppose that initially
there is the unique \emph{active} node in a network (a leader).  It
starts communication and sends its identifier to neighbors bit by
bit. An initially nonactive node awakes after receiving a non-empty
message and chooses a sender of this message as a parent.  Then the
node also transmits to its neighbors bits received from the parent.
In this way we obtain a protocol that constructs a rooted tree in time
$O(D+L)$, where $L$ is the size of the active node identifier.  The
root of the tree is the initially active node.  Such protocols are
called \emph{flooding protocols}.

In absence of a leader (the initially active node) the
straightforward realization of this idea takes more time. We mean the
following rules: nodes send identifiers to the neighbors, compare
received identifiers and transmit the current minimal value. In this way we obtain a sort of flooding protocol to
broadcast the minimal identifier. The protocol is easily transformed
to a protocol constructing a spanning tree. 
The running time of this protocol in
our settings is $O(DL)$. Essentially, it
is the well-known Perlman protocol~\cite{P85} adjusted for
our model.

To speed up this flooding protocol we will use two ideas. 

At first, we encode identifiers in such a way that for encoded strings
the result of comparison of string's prefixes does not change after
receiving the suffixes of the strings. It gives a way to compare long
strings by their short prefixes. Thus, our protocols transmit
\emph{keys}, i.e. encoded identifiers.

The second idea is to transmit the current information about the
minimal key although it might be incorrect. When incorrectness is
revealed, a correction message is sent instead of re-sending the whole
information about the minimal key. Of course,    corrections
make the running time longer. In Section~\ref{min-broadcasting} we
give the exact rules for the protocol and prove the  time
bound given in Theorem~\ref{th:m-t-prot}. 

In this way we get a message terminating protocol. After receiving a
set of keys, a~node can not exclude a possibility that
somewhere in a network there is a smaller key.

To make a protocol processor terminating requires additional
work.  Thus nodes should communicate and establish
the fact of termination of the key distribution process.  To make this
communication faster, the nodes may use a spanning tree built by the
message terminating protocol. For this purpose we implement a~protocol
based on the idea of  echo protocol (see, e.g.~\cite{W14}) adjusted
to our model.  Broadcasting messages through a tree requires a time
proportional to the diameter of the tree. 
Thus we come to the time bound given in Theorem~\ref{th:p-t-prot}: the
tree depth does not exceed the running time of the message terminating
protocol. Details are presented in Section~\ref{finalizing}.

\section{Message terminating protocol for the minimal identifier
  broadcasting}\label{min-broadcasting} 

\subsection{From identifiers to keys}\label{id->key}

We encode the identifiers by the map
\begin{equation*}
 K\colon x\mapsto 1^{k+1}0^{2^k-|x|}1x,\quad\text{where } k=\lceil\log
 |x|\rceil+1,
\end{equation*}
on the set $\{0,1\}^*$ of binary strings. 
Hereinafter we use the standard notation: $|x|$ is the length of a
  binary string $x$,  $a^k$ is a string $a$ repeated $k$
  times. All logarithms are binary. For the empty string $\ld$ we set
  $K(\ld) = 101$. It corresponds to the general rule with $k= 0$.

The image $c(\{0,1\}^*)$ is the set of \emph{keys}.
 
It is easy to see  several  simple and useful properties of this encoding:
\begin{itemize}
\item [--] if $|x|=|y|$ then $|K(x)|=|K(y)|$;
\item [--] the length of $x$ is determined by a prefix
  $1^{k+1}0^{2^k-|x|}1$ of the key $K(x)$;
\item [--] the encoding is injective (an identifier can be restored from its key);
\item [--] any key is not a proper prefix of another key;
\item [--] $|K(x)|=O(|x|)$. 
\end{itemize}

We are interested in asymptotic bounds. So we will ignore
the difference between 
lengths of keys and 
lengths of identifiers.

Keys are compared in \emph{the partial lexicographical order} (PL order). Let
$\lcp(u,v)$ be the longest common prefix of the strings $u$, $v$. Then
by definition $u$ is less than $ v$ in the PL order if
$u=\lcp(u,v)0u'$, $v=\lcp(u,v)1v'$.  We denote this fact by $u\ls v$.

Note that a prefix of a string is not PL comparable with the string. It
justifies the term `partial'. In the other direction, if $\lcp(u,v)$
is a proper prefix of both strings $u$ 
and $v$ then the strings are comparable in the PL order.

We indicate properties of the PL order that will be useful in the
analysis of the protocols.

\begin{prop}\label{greedy-compare}
  If  $u\ls v$  then $uw'\ls vw''$ for all $w'$, $w''$.
\end{prop}

The proposition follows  easily  from the definition of the PL order.

\begin{prop}\label{length-compare}
  If  $u$, $v$ are keys and
  $|u|<|v|$ then   $u\ls v$. Moreover, there are 
  prefixes $u'$ and $v'$ of the keys
  $u$, $v$ such that 
  $u'\ls v'$ and $|u'|=|v'|\leq2+ \log |u|$.
\end{prop}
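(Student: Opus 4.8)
The plan is to read off both claims directly from the explicit shape of a key. First I would record the length of a key as a function of the single parameter~$k$. Writing $k=\lceil\log|x|\rceil+1$, the key $K(x)=1^{k+1}0^{2^k-|x|}1x$ has length
\[
|K(x)| = (k+1) + (2^k-|x|) + 1 + |x| = 2^k + k + 2,
\]
which depends only on~$k$ and not on~$x$ itself. Since $k\mapsto 2^k+k+2$ is strictly increasing on the nonnegative integers, the hypothesis $|u|<|v|$ forces $k_u<k_v$, where $k_u$, $k_v$ are the parameters of the keys $u=K(x)$, $v=K(y)$. I would also check that the zero block is genuinely nonempty: from $k_u=\lceil\log|x|\rceil+1$ one gets $2^{k_u}\geq 2|x|$, so the block contains at least $|x|\geq1$ zeros, with the empty-string key $K(\ld)=101$ covered directly.

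Next I would exploit that the length is written in unary by the leading block $1^{k+1}$. The key $u$ begins with exactly $k_u+1$ ones followed by a~$0$ (the first symbol of its nonempty zero block), so the symbol of~$u$ in position $k_u+2$ is~$0$. Since $k_v\geq k_u+1$, the key $v$ begins with at least $k_u+2$ ones, so the symbol of~$v$ in position $k_u+2$ is~$1$. Thus $u$ and $v$ agree on their first $k_u+1$ positions and differ at position $k_u+2$, where $u$ carries~$0$ and $v$ carries~$1$; by the definition of the PL order this is exactly $u\ls v$, which proves the first assertion.

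For the refined statement I would take the prefixes $u'=1^{k_u+1}0$ and $v'=1^{k_u+1}1$, both of length $k_u+2$; they are genuine prefixes of $u$ and $v$ by the previous paragraph, and since they agree on their first $k_u+1$ symbols and differ in the last, the same computation gives $u'\ls v'$. It remains to bound the common length: from $|u|=2^{k_u}+k_u+2>2^{k_u}$ we obtain $k_u<\log|u|$, hence $|u'|=|v'|=k_u+2<2+\log|u|$, as required. The whole argument is a direct reading of the encoding; the only points that need care are the off-by-one counting of the leading ones against the first differing position, and the verification that the zero block is nonempty so that position $k_u+2$ of~$u$ really is a~$0$.
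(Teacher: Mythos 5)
Your proof is correct and follows essentially the same route as the paper: both identify $\lcp(u,v)=1^{k_u+1}$, take the prefixes $1^{k_u+1}0$ and $1^{k_u+2}$ of length $k_u+2$, and bound this length by $2+\log|u|$ using $|u|=2^{k_u}+k_u+2$. Your version is merely more detailed, e.g.\ in explicitly verifying that the zero block of a key is nonempty, a fact the paper's proof uses implicitly.
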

\begin{proof}
  Note that $\lcp(u,v)=1^{k+1}$, where $|u|=2+k +2^k $. So the prefixes
  $u'=1^{k+1}0$, $v'=1^{k+2}$ of
  the keys $u$, $v$ are PL compared and $u'\ls v'$.

  The second claim
  follows from the inequality $k+2\leq
  2+\log(2+k+2^k)$. 
\end{proof}

\begin{remark}
  There is no linear order on the whole set of binary strings satisfying
  Proposition~\ref{greedy-compare} and
  Proposition~\ref{length-compare}. Thus we need to introduce keys.

  Also note that the PL order is a linear order  on
  the set of the keys. 
\end{remark}

\begin{prop}\label{lex-consistent}
The PL order on the keys is consistent with the shortlex
 order $\lex$ on the identifiers: $x\lex y$ iff $K(x)\ls K(y)$.
\end{prop}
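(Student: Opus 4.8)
The plan is to prove $x \lex y \iff K(x) \ls K(y)$ by a case analysis on whether $|x| = |y|$ or $|x| \neq |y|$, since the shortlex order $\lex$ itself is defined by exactly this dichotomy (compare by length first, then lexicographically). Because the map $K$ is injective and $\ls$ is a linear order on keys (by the preceding remark), it suffices to prove one direction of each equivalence on each case; the converse follows from linearity and injectivity. So I would establish the forward implication $x \lex y \Rightarrow K(x) \ls K(y)$ and then invoke trichotomy.

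First I would handle the unequal-length case. Suppose $x \lex y$ because $|x| < |y|$. Then $|K(x)| < |K(y)|$ (since the encoding is length-monotone: if $|x|=|y|$ then $|K(x)|=|K(y)|$, and $K$ strictly increases the block-length parameter $k$ with $|x|$). By Proposition~\ref{length-compare}, $|K(x)| < |K(y)|$ already gives $K(x) \ls K(y)$ directly. This case is essentially free once I note that $K$ preserves the length ordering.

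Second, the equal-length case: suppose $|x| = |y|$ and $x \lex y$, which now means $x$ precedes $y$ lexicographically. Here $|K(x)| = |K(y)|$, and the two keys share the common prefix $1^{k+1}0^{2^k - |x|}1$ (identical because $|x|=|y|$ forces the same $k$ and the same padding length). After this shared prefix the keys carry $x$ and $y$ verbatim. Since $x \lex y$ lexicographically among equal-length strings, $x$ and $y$ first differ at some position where $x$ has $0$ and $y$ has $1$; writing $x = \lcp(x,y)0x'$ and $y = \lcp(x,y)1y'$, the keys inherit this: $\lcp(K(x),K(y))$ equals the shared prefix concatenated with $\lcp(x,y)$, and then $K(x)$ continues with $0$ while $K(y)$ continues with $1$. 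By the definition of the PL order this is exactly $K(x) \ls K(y)$.

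The main obstacle, though a mild one, is bookkeeping in the equal-length case: I must verify that the common prefix of the two keys really is the full padding block plus $\lcp(x,y)$, i.e. that the first disagreement between $K(x)$ and $K(y)$ occurs precisely at the first disagreement between the payloads $x$ and $y$ rather than somewhere in the padding. This is immediate since equal lengths yield identical prefixes $1^{k+1}0^{2^k-|x|}1$, but it is the step where the structure of the encoding is genuinely used. Having proved the forward direction in both cases, I would close by remarking that since $\lex$ is a linear order on identifiers, $\ls$ is a linear order on keys, and $K$ is injective, the forward implication together with trichotomy yields the full equivalence.
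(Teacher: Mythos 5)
Your equal-length case and your use of trichotomy (prove only the forward implication, then invoke linearity of both orders and injectivity of $K$) are both sound, and the equal-length analysis matches the paper's. But the unequal-length case contains a genuine error: you claim that $|x| < |y|$ implies $|K(x)| < |K(y)|$, justified by saying that $K$ ``strictly increases the block-length parameter $k$ with $|x|$''. This is false. The length of a key is $|K(x)| = (k+1) + (2^k - |x|) + 1 + |x| = k + 2 + 2^k$, which depends \emph{only} on $k = \lceil\log|x|\rceil + 1$, and $k$ is constant as $|x|$ ranges over $(2^{j-1}, 2^{j}]$. For instance, $|x| = 3$ and $|y| = 4$ both give $k = 3$ and $|K(x)| = |K(y)| = 13$. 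So Proposition~\ref{length-compare} does not dispose of the whole case, and your argument says nothing about the subcase $|x| < |y|$ with $|K(x)| = |K(y)|$.

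That missing subcase is precisely where the padding block in the encoding does its work, and it is the part of the paper's proof your proposal omits. When $|x| < |y|$ but $k$ is the same, write
\[
K(x) = 1^{k+1}0^{2^k - |x|}1x, \qquad K(y) = 1^{k+1}0^{2^k - |y|}1y,
\]
and note $2^k - |x| > 2^k - |y|$: after the common prefix $1^{k+1}0^{2^k - |y|}$, the key $K(x)$ continues with a $0$ (part of its longer run of zeros) while $K(y)$ continues with a $1$. Hence $K(x) \ls K(y)$ by the definition of the PL order. Adding this subcase (and keeping your remark that $|K(x)| > |K(y)|$ is impossible since $k_x \le k_y$) repairs the proof and makes it essentially identical to the paper's.
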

\begin{proof}
  If $|x|=|y|$  then $K(x) = p x$, $K(y)=py$. In this case  $K(x)\ls
  K(y)$ is equivalent to $x\lex y$.

  If $|x|<|y|$ then either $|K(x)|<|K(y)|$  or $|K(x)|=|K(y)|$.  The former
  implies $K(x)\ls K(y)$  by Proposition~\ref{length-compare}.  If
  $|K(x)|=|K(y)|$, then 
  $K(y)=1^{k+1}0^{2^k-|y|}1y$ and $K(x) =1^{k+1}0^{2^k-|y|}0sx $. Thus
  $K(x)\ls K(y)$.

  In the other direction the arguments are similar.
\end{proof}

\begin{prop}\label{min-PL}
  Let $M$ be the set of PL-minimal elements of a set $S$ of strings. Then for any
  pair of strings in $M$ one is a prefix of another.
\end{prop}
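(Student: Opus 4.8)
The plan is to argue by contradiction, exploiting the incomparability criterion for the PL order that is recorded just after its definition in Section~\ref{id->key}: a prefix of a string is not PL-comparable with that string, whereas if $\lcp(u,v)$ is a \emph{proper} prefix of both $u$ and $v$ then $u$ and $v$ \emph{are} PL-comparable. For two distinct strings these two cases are exhaustive and mutually exclusive, so one obtains the clean equivalence that distinct $u$, $v$ fail to be PL-comparable if and only if one of them is a prefix of the other.

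First I would fix two arbitrary elements $u,v\in M$. If $u=v$ the conclusion is trivial, so assume $u\ne v$. Suppose, toward a contradiction, that neither string is a prefix of the other. Then $\lcp(u,v)$ is a proper prefix of both, and by the criterion above $u$ and $v$ are PL-comparable; by symmetry I may assume $u\ls v$. Since $u\in S$, the existence of such a $u$ shows that $v$ is not PL-minimal in $S$, which contradicts $v\in M$. Therefore the assumption fails, and one of $u$, $v$ must be a prefix of the other, as claimed.

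There is essentially no genuine obstacle in this argument; its entire content is reading off the incomparability criterion correctly from the definition of $\ls$ and applying minimality. The only bookkeeping points are to dispose of the degenerate case $u=v$ at the outset and to invoke the symmetry between $u$ and $v$ so as to assume $u\ls v$ without loss of generality. I would not expect any routine calculation to be required.
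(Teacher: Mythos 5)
Your proof is correct and follows essentially the same route as the paper's: the paper's two-sentence argument (PL-minimal elements form an antichain, and the only PL-incomparable pairs are a string and its prefix) is exactly your contradiction argument written compactly. You have merely unfolded the antichain observation and the incomparability criterion into explicit steps, which is fine.
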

\begin{proof}
  The minimal elements form an antichain in the PL order. But the only
  PL uncomparable pairs are (string $u$, a prefix of $u$).
\end{proof}

\subsection{Description of the protocol}

Messages in the protocol consist of the \emph{information bit} and
\emph{control bits}. There will be finitely many control bits.  Thus
the communication speed is $O(1)$.  The value~1 of a~control bit
indicates that a specific event occurs in the process of
communication.

Absence of the information bit in a message is allowed (the case is
indicated by the empty string instead of 1-bit string).
We assume that if the information bit is empty and all control bits
are~$0$, then no message is sent (formally, the empty string is sent).

Each node sends the same messages to all its neighbors in this
protocol. The message is determined by the following data. 

Executing the protocol, a node maintains \emph{the candidate} (to the
minimal key) and the list
of \emph{participants}. Candidates and participants are binary strings.
Participants are  prefixes of the candidates that has been received
by a node from its neighbors. In particular, the number of
participants equals the number of neighbors in each node.

Initially the
candidate of a node is its key and all participants are empty. 

At any moment of time the candidate is the longest PL-minimal string
among the participants and the node's
key. Proposition~\ref{min-PL} guarantees the uniqueness of the
candidate.

At \emph{regular periods of time} a node sends bits of its candidate
one by one until all bits of the candidate are sent.  At
\emph{exceptional periods of time} the participant $p$ sent to the
neighbors of a node is not a prefix of the current candidate $q$ of
the node. During an exceptional period the node sends \emph{a
correction message}: the length of $\lcp(p,q)$ written in binary. (In
other words, a node tells to neighbors how many bits in its data are
valid.)  The end of a correction message is indicated in control bits
as well as the end of a correction message.

It is possible that the candidate is changed during an exceptional period
of time. It might cause that the current correction message becames
incorrect (more bits in the participant are wrong). In this case the
current correction message is aborted (this event is indicated in
control bits) and the node starts transmission
of a new correction message.

If a node has finished  transmission of the candidate it goes to
a~sleep state.

Below we provide the exact rules of the protocol. We need the
following notation.

Let $w[i:j]$ be the substring of a string $w$
that starts at the position $i$ and ends in the position $j$. A~single
$j$th bit of the string is denoted by~$w[j]$.

$K(u)$ is the key of the node $u$, i.e., is a shortcut of
$K(I(u))$. 
$C(t,u)$ is the candidate of the node $u$ after  $t$ steps of the protocol. 
$P(t,u)$ is the participant sent by the node $u$  to all
its neighbors  after  $t$ steps of the protocol. 

To define the rules for exceptional periods we need a notation 
$R(t,u)$ that means the prefix of the correction message that has been
sent to neighbors after  $t$ steps of the protocol. During regular
periods,  $R(t,u)$ is the empty string.

In this protocol a node sends the same message to all its
neighbors. So we indicate an information sent by the node. Just the
same information is received by all neighbors of the node (recall that
there are no   communication errors in our model). 

Thus, an operation of a node at step  $t+1$ of protocol is defined 
by the data
\[
C(t,u), \ P(t,u),\ R(t,u),\quad (P(t,v): (v,u)\in E(G)),
\quad (R(t,v): (v,u)\in E(G)),
\]
i.e., by the candidate, by the prefixes of the candidate and a correction
message sent up to the time, and by participants and prefixes of
correction messages received from neighbors. The node also takes into
account the values of control bits. For simplicity we do not introduce
notation for control bits. Instead, we describe a way to use them.

To form a message at step  $t+1$, the node also uses a correction
message string $L(t,u)$. It is determined by the basic data as
follows: if  $P(t,u)$ is a prefix of  $C(t,u)$ (a regular period),
then   $L(t,u)$ is the empty string. Otherwise (an exceptional period)
$$
L(t,u) = \bin\Big(\big|\lcp(C(t,u),P(t,u))\big|\Big),
$$ 
where $\bin(n)$ is a binary representation of an integer~$n$.

If  $L(t,u)$ is empty, then, if $P(t,u)\ne C(t,u)$, the information bit
  $b(t+1,u)$ of the next message is the next bit of the candidate, i.e.,
\[
b(t+1,u) = C(t,u)\big[|P(t,u)|+1\big];
\]
otherwise, $b(t+1,u)=\ld$. 

If  $L(t,u)$ is non-empty, then the node transmits the following
 message (and indicates in control bits that it is a message of an
 exceptional period): 
\begin{enumerate}
\item[--] if $R(t,u)$ is non-empty and  $R(t,u)$ is a prefix of  $L(t,u)$, then
  the information bit  $b(t+1,u)$ of the message is the next bit of
  the correction message, i.e.,
  \[
  b(t+1,u) = L(t,u)\big[|R(t,u)|+1\big];
  \]
\item [--] if either $R(t,u)$ is empty or $R(t,u)$ is not a prefix of
  $L(t,u)$, then the information bit 
  $b(t+1,u)$ of the message is  $L(t,u)[1]$, in control bits the fact
  of starting correction phase is indicated;
\item[--] a~transmission of the last bit of a correction message is
  indicated in control bits also.
\end{enumerate}

Figures~\ref{regular-message} and~\ref{correction-message} illustrate
the rules determining the value of the information bit of a message.

\begin{figure}
  \begin{minipage}[b]{0.45\textwidth}
    \centerline{\mpfile{message}{1}}
    \caption{Regular period}\label{regular-message}
  \end{minipage}\hfill
  \begin{minipage}[b]{0.45\textwidth}
    \centerline{\mpfile{message}{2}}
    \caption{Exceptional period}\label{correction-message}
  \end{minipage}
  \parfillskip=0pt\par
\end{figure}

Depending on the messages received at moment~$t+1$, the nodes change
the basic data according the following rules.

If  $L(t,u)=\ld$ (a regular period), then 
$$P(t+1,u) = P(t,u)b(t+1,u),\quad R(t+1,u) = \ld.$$ 

If $L(t,u)\ne\ld$ (an exceptional period), then changes in the data
are different:
\begin{enumerate}
\item [--] if  $R(t,u)b(t+1,u)= L(t,u)$ (the end of the exceptional
  period), then
  $R(t+1,u) =\ld$ and $P(t+1,u) =
\lcp(C(t,u),P(t,u)) $;
\item [--] if $R(t,u)b(t+1,u)$ is a proper prefix of
 $L(t,u)$, then   $R(t+1,u)= R(t,u)b(t+1,u)$ and $P(t+1,u) = P(t,u)$;
\item [--]  if  $R(t,u)b(t+1,u)$ is not a  prefix 
 $L(t,u)$, then $R(t+1,u) = b(t+1,u)$ and $P(t+1,u) = P(t,u)$ (it
 follows from the rules determining the messages of exceptional period
 that this case is possible if a correction message is aborted for
 another correction message).
\end{enumerate}

Note that, taking into account the received information and control
bits, a node is able to update the data of neighbors, i.e., to
determine new values of the lists
$(P(t+1,v): (v,u)\in E(G))$ and $ (R(t+1,v): (v,u)\in E(G))$. In
particular, after receiving the last bit of a correction message from
a neighbor~$v$, a node $u$ sets the participant
 $P(t+1,v)$ to be the
prefix of
$P(t,v)$ of length  $\ell(t,v)$, where the binary representation of an
integer  $\ell(t,v)$ is just the correction message~$L(t,v)$ received.

The last piece of the updated data to be determined is the candidate
of a~node. The new value $C(t+1,u)$ is the longest PL-minimal string
in the set $\{C(t,u), P(t+1,v): (v,u)\in E(G) \}$.

Initially, at moment $t=0$, all participants $P(0,u)$ and all prefixes
of correction messages  $R(0,u)$ are empty and the candidates are the
keys:  $C(0,u) = K(u)$ for all~$u$. 

Note that the candidates, the participants and the correction messages
are functions of the communication history in the node. Thus the
described protocol satisfies the definition~\eqref{evolution}.

Let's check the important property of the protocol, which has been
indicated in the above informal discussion. 

\begin{prop}\label{participant=pref(candidate)}
  A~participant $P(t+1,u)$ is a prefix of some candidate $C(t',u)$, $t'<t$. 
\end{prop}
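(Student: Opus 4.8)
The plan is to argue by induction on the step number $t$, reading the behaviour of the participant directly off the update rules. The base case is immediate: at $t=0$ the node is in a regular period, since $P(0,u)=\ld\pref C(0,u)=K(u)$, so the rules give $P(1,u)=b(1,u)=K(u)[1]$, which is a prefix of $C(0,u)$. For the inductive step I would split according to the value of $L(t,u)$, i.e. according to whether step $t+1$ falls in a regular or an exceptional period, and show in each case that $P(t+1,u)$ is a prefix either of the current candidate $C(t,u)$ or, via the induction hypothesis, of an earlier candidate.

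First, when $L(t,u)=\ld$ (a regular period) the rule gives $P(t+1,u)=P(t,u)\,b(t+1,u)$ with $b(t+1,u)=C(t,u)\big[|P(t,u)|+1\big]$, or $b(t+1,u)=\ld$. By definition a regular period means exactly that $P(t,u)\pref C(t,u)$, so in both subcases $P(t+1,u)\pref C(t,u)$, and the current candidate witnesses the claim. The same holds at the end of a correction: there $P(t+1,u)=\lcp(C(t,u),P(t,u))$, which is a prefix of $C(t,u)$ by definition of the longest common prefix.

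The only case that genuinely needs the induction hypothesis is the one where a correction is still in progress or is aborted in favour of a new correction message: in these two exceptional subcases the rules set $P(t+1,u)=P(t,u)$, and here $P(t,u)$ is \emph{not} a prefix of the current candidate $C(t,u)$ — that is precisely why the period is exceptional. In this situation I would apply the induction hypothesis to $P(t,u)$, obtaining an earlier candidate $C(t',u)$ with $t'<t$ of which $P(t,u)$, and hence $P(t+1,u)=P(t,u)$, is a prefix.

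The structural fact that makes the induction close is that during an exceptional period the participant is never extended by a fresh candidate bit: each of the three exceptional update rules either leaves $P$ unchanged or truncates it to $\lcp(C(t,u),P(t,u))$. Thus the participant grows only during regular periods, and at every growth step the appended bit is a bit of the then-current candidate; this is what ties $P(t+1,u)$ to a candidate held at some time $t'\le t$ (the current one in a growth or truncation step, an earlier one otherwise). I expect the main obstacle to be the bookkeeping of the exceptional-period rules: one must verify that in \emph{every} exceptional subcase $P(t+1,u)$ is either a prefix of $C(t,u)$ or equal to $P(t,u)$, and that the bit index $|P(t,u)|+1$ used in a regular period is always valid, which follows from $P(t,u)\pref C(t,u)$ together with $P(t,u)\ne C(t,u)$.
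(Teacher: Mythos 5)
Your proof is correct and takes essentially the same approach as the paper's: induction on the moments of time with a case analysis of the update rules, in which the regular-period subcases and the end-of-correction subcase exhibit $P(t+1,u)$ as a prefix of the current candidate $C(t,u)$, while the remaining exceptional subcases (correction in progress or aborted) leave the participant unchanged and fall back on the induction hypothesis. The paper's own argument is just a terser version of this same case split.
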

\begin{proof}
  Induction by the moments of time. Initially, at $t=0$, the
  proposition holds due to the rules of the protocol (the first step
  is regular). 

  Inductive step is a consideration of possible cases for 
  determining  a new value of a participant. 

  A~regular period, the information bit is non-empty. In this case
  $$P(t+1,u)=P(t,u)b(t+1,u)$$ is a prefix of the candidate $C(t,u)$ by
  the rule of determining the information bit.

  A~regular period, the information bit is empty. In this case
  $$P(t+1,u)= C(t,u).$$

  The end of an exceptional period: by definition, $P(t+1,u)$ is a
  prefix of   $C(t,u)$.

  On other steps of an exceptional period, the participant
  does not change. Thus, by the induction hypothesis, it is a prefix
  of a previous candidate.
\end{proof}

Informally, the candidates and the participants provide a sequence of
approximations to the minimal key. Therefore it is natural to expect
that the sequence is monotone in each node. The strings are compared
in a~partial order. So the only weak monotonicity is possible.  The
next proposition formalizes this intuition.

\begin{prop}\label{monotone0}
  A sequence of the candidates $(C(t,u))$ in a node $u$ is non-increasing;
  moreover,  for $t'>t$ either
  $C(t,u)$ is a prefix of $C(t',u)$ or $C(t',u)\ls C(t,u)$.
\end{prop}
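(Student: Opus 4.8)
The statement to establish is really the ``moreover'' clause, which sharpens non-increase into a dichotomy: for every $t'>t$, either $C(t,u)\pref C(t',u)$ or $C(t',u)\ls C(t,u)$. The plan is to introduce the relation $a\mathrel{R}b$ meaning ``$a\pref b$ or $b\ls a$'' and to prove two facts: (i) $R$ is reflexive and transitive, and (ii) the one-step relation $C(t,u)\mathrel{R}C(t+1,u)$ holds for every $t$. Granting these, the proposition follows by induction on $t'$: the one-step facts chain together as $C(t,u)\mathrel{R}C(t+1,u)\mathrel{R}\cdots\mathrel{R}C(t',u)$, and transitivity collapses the chain to $C(t,u)\mathrel{R}C(t',u)$, which is exactly the claimed dichotomy.

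For (i), reflexivity is immediate from $a\pref a$. For transitivity I would split into the four cases according to which disjunct holds in $a\mathrel{R}b$ and in $b\mathrel{R}c$; the only ingredient needed is that $\ls$ is itself a strict partial order (in particular transitive). Both this and the four-way combination reduce to comparing the relevant longest common prefixes and locating the first differing bit, exactly in the spirit of the proof of Proposition~\ref{length-compare}. I would also record the elementary trichotomy that any two strings $a,b$ satisfy exactly one of $a\pref b$, $b\pref a$, $a\ls b$, $b\ls a$ (up to equality), since $\lcp(a,b)$ either equals one of the two strings or is a proper prefix of both.

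The heart of the argument is (ii). Write $S=\{C(t,u)\}\cup\{P(t+1,v):(v,u)\in E(G)\}$, so that by definition $C(t+1,u)=m$ is the longest PL-minimal element of $S$, whose existence and uniqueness come from Proposition~\ref{min-PL}. Since both $m$ and $C(t,u)$ lie in $S$, the trichotomy gives four possible relations between them. The case $C(t,u)\ls m$ is impossible because $m$ is PL-minimal in $S$. The cases $C(t,u)\pref m$ (including $m=C(t,u)$) and $m\ls C(t,u)$ are precisely the two disjuncts of $C(t,u)\mathrel{R}C(t+1,u)$, so they are already what we want. Everything therefore hinges on excluding the remaining case, in which $m$ is a \emph{proper} prefix of $C(t,u)$; this is the main obstacle.

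To rule it out I would exploit the structure of the minimal set. By Proposition~\ref{min-PL} the PL-minimal elements of $S$ form a prefix-chain, and $m$ is its longest element. If $C(t,u)$ were itself PL-minimal it would belong to this chain while being strictly longer than $m$, contradicting maximality; hence $C(t,u)$ is not minimal. Because $\ls$ is a strict partial order on the finite set $S$, every non-minimal element lies strictly above some PL-minimal element, so there is a minimal $m_i$ with $m_i\ls C(t,u)$. Being a member of the prefix-chain, $m_i\pref m$, and together with the assumed $m\pref C(t,u)$ this forces $m_i$ to be a prefix of $C(t,u)$ — contradicting $m_i\ls C(t,u)$, which means $m_i$ differs from $C(t,u)$ at the bit following their common prefix. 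This contradiction eliminates the bad case, closing the one-step claim (ii) and, through (i), the whole proposition.
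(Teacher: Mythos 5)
Your proof is correct and follows essentially the same route as the paper's: a one-step dichotomy between $C(t,u)$ and $C(t+1,u)$ derived from the candidate-update rule together with Proposition~\ref{min-PL}, then induction on $t'$ via transitivity of the relation ``is a prefix of, or is PL-greater than.'' The paper states both steps very tersely; your case analysis (excluding $C(t+1,u)$ being a proper prefix of $C(t,u)$ by producing a PL-minimal element below a non-minimal $C(t,u)$) just fills in the details the paper leaves implicit.
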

\begin{proof}
  By the rule of determining the next candidate as the longest
  possible PL-minimal string over the set that contains the previous
  candidate, it follows from  $C(t+1,u)\not\ls
  C(t,u)$ that  $C(t,u)$ is a prefix $C(t+1,u)$ due to
  Proposition~\ref{min-PL}. 

  For arbitrary $t'>t$ the statement follows by induction and
  transitivity of an order relation.
\end{proof}

\subsection{Correctness of the protocol}

It follows from the rules of the protocol that a node sends the empty
message if  $P(t,u)=C(t,u)$. This condition remains true if the node
has not received non-empty messages from its neighbor (the data remain
unchanged). Therefore, if $P(t,u)=C(t,u)$, then the node $u$ is in a
sleep state.

Below we prove the lemma.

\begin{lemma}\label{weak-correctness}
  At some moment of time the candidates and the participants in all nodes are
  equal to the PL-minimal key in a network.  
\end{lemma}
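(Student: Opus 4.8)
The plan is to fix the PL-minimal key $\mu$ in the network and the node $r$ with $K(r)=\mu$, and to prove convergence in four stages. First I would establish a \emph{lower-bound invariant}: every candidate $C(t,u)$ and every participant $P(t,u)$ is a prefix of some key, and none of them is $\ls\mu$. The first part is an easy induction on $t$, since a new candidate is the longest PL-minimal element of a set consisting of the old candidate and the neighbours' participants, each of which is a prefix of a key by Proposition~\ref{participant=pref(candidate)}; hence the chosen element is again a prefix of a key. For the second part, if some $C(t,u)\ls\mu$, then writing $C(t,u)$ as a prefix of a key $K(w)$, Proposition~\ref{greedy-compare} gives $K(w)\ls\mu$, contradicting minimality of $\mu$. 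Combining the two parts with the facts that the only PL-incomparable pairs are (string, prefix) and that no key is a proper prefix of another, every candidate and participant is either $\gs\mu$ or a prefix of $\mu$ (equality included).

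Next I would show that the root behaves correctly: $C(t,r)=\mu$ for all $t$. This follows by induction, because $\mu$ is always PL-minimal in its defining set (no element is $\ls\mu$ by the invariant) and is the longest minimal element there (no participant or candidate can have $\mu$ as a proper prefix, again since $\mu$ cannot be a proper prefix of a key). As $r$ never needs a correction, it transmits $\mu$ bit by bit, and after $|\mu|$ regular steps its participant satisfies $P(t,r)=\mu$, after which $r$ sleeps.

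The third stage is stabilization. Using Proposition~\ref{monotone0}, between two moments a candidate either only grows in length along a fixed prefix-branch or strictly PL-decreases. Since all candidates are prefixes of the finitely many keys, strict decreases can occur only finitely often, while length can increase at most up to the maximal key length; hence each node's candidate is eventually constant, say $C_\infty(u)$. Once the candidate is fixed, the node completes its transmission: if the current participant $P$ is not a prefix of $C_\infty(u)$ it performs one last correction that sets the participant to $\lcp(C_\infty(u),P)$, and then sends the remaining bits, so the participant stabilizes to $C_\infty(u)$ as well.

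Finally I would prove $C_\infty(u)=\mu$ for every $u$ by a connectivity argument. The set of nodes with $C_\infty=\mu$ is nonempty, as it contains $r$. If it were not all of $V$, connectedness would yield an edge $(v,u)$ with $C_\infty(v)=\mu$ and $C_\infty(u)\ne\mu$; by stage three $v$'s participant is eventually $\mu$, and since a node faithfully reconstructs a neighbour's participant one step after it is sent, $u$'s stored value for $v$ is eventually $\mu$, so $\mu$ lies in the set defining $C(t+1,u)$. By the invariant $\mu$ is PL-minimal there and, as above, the longest minimal element, forcing $C_\infty(u)=\mu$, a contradiction. Taking the maximum of the finitely many stabilization times gives a moment at which all candidates and all participants equal $\mu$, which is the assertion of the lemma. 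I expect the main difficulty to be stage four together with the bookkeeping of the correction mechanism in stage three: one must check carefully that a node actually re-sends its corrected candidate to completion and that its neighbours' records track the transmitted participants exactly, so that an eventually-stable neighbour value $\mu$ is genuinely registered and propagated.
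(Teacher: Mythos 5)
Your proof is correct, but its global structure is genuinely different from the paper's. You share the local ingredients: the invariant that candidates and participants are prefixes of keys and never $\ls K_{\min}$ (the paper's Proposition~\ref{participant=pref(key)} combined with Proposition~\ref{key-dichotomy}, which you rederive via Proposition~\ref{greedy-compare}), the stability of the minimal key once adopted (Corollary~\ref{Kmin-stability}, which you prove for the root), and the observation that a node whose candidate has stopped changing completes at most one final correction and then transmits that candidate to the end. But where the paper closes the argument by induction on the distance $d$ from the node with the minimal key --- a neighbor at distance $d-1$ finishes transmitting $K_{\min}$, so its participant becomes $K_{\min}$, which forces a node at distance $d$ to adopt it --- you instead prove that \emph{every} node's candidate stabilizes to some limit $C_\infty(u)$ (a well-foundedness argument from Proposition~\ref{monotone0} plus finiteness of the set of key prefixes, an ingredient the paper never needs), and then rule out $C_\infty(u)\ne K_{\min}$ by a boundary-edge contradiction using connectivity of the network. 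Both routes are sound; your stabilization step does rest on the small unstated check that a candidate value, once strictly PL-decreased below, can never recur (this follows from Proposition~\ref{monotone0} together with transitivity and irreflexivity of the PL order, and it is exactly what legitimizes ``strict decreases occur only finitely often''). The trade-off is this: the paper's distance induction is quantitative --- it immediately yields the $O(DL)$ remark and is the template for the delay analysis in Lemma~\ref{delay-lemma} --- whereas your argument is purely qualitative, producing no running-time bound, but it is more robust in that it never tracks \emph{when} events happen, only that they eventually do, so it would transfer essentially unchanged to an asynchronous setting.
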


Proposition~\ref{lex-consistent} (the PL order on the keys is
consistent with the short-lex order on the identifiers), injectivity
of the mapping from the identifiers to the keys, and
Lemma~\ref{weak-correctness} imply that the protocol described in the
previous subsection is a message terminating protocol broadcasting the
minimal identifier.

Before the proof of Lemma~\ref{weak-correctness} we prove several
facts that will also be useful in evaluating of the running time of the
protocol. 

\begin{prop}\label{participant=pref(key)}
  At any moment of time $t$ the participant  $P(t,u)$ and the candidate
  $C(t,u)$ of any node $u$ are prefixes of node keys.
\end{prop}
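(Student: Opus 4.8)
The plan is to prove both assertions — that $P(t,u)$ and $C(t,u)$ are prefixes of node keys — simultaneously by induction on $t$, since the two quantities are defined through one another: a participant is read off from a candidate, while the candidate $C(t+1,u)$ is assembled from the participants received from the neighbors. The base case $t=0$ is immediate: $C(0,u)=K(u)$ is itself a key, and every $P(0,u)=\ld$ is a prefix of any key.

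For the inductive step I would first dispose of the participants, using the update rules directly. In a regular period $P(t,u)$ is by definition a prefix of $C(t,u)$, so either $P(t+1,u)=P(t,u)b(t+1,u)=C(t,u)[1\!:\!|P(t,u)|+1]$, again a prefix of $C(t,u)$, or $P(t+1,u)=C(t,u)$; at the end of an exceptional period $P(t+1,u)=\lcp(C(t,u),P(t,u))$, which is a prefix of $C(t,u)$; and on all other steps $P(t+1,u)=P(t,u)$ is unchanged. In every case $P(t+1,u)$ is a prefix of a previous candidate, which by the induction hypothesis is a prefix of a key, and a prefix of a prefix is a prefix. (This is exactly the content of Proposition~\ref{participant=pref(candidate)}, which one may cite instead of redoing the case analysis.)

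Then I would treat the candidate. By definition $C(t+1,u)$ is the longest PL-minimal string of the set $S=\{C(t,u)\}\cup\{P(t+1,v):(v,u)\in E(G)\}$. The key observation is that this string is literally a member of $S$: by Proposition~\ref{min-PL} the PL-minimal elements of $S$ form a chain under the prefix relation, so among them there is a (unique) longest one, and that longest element is one of the strings in $S$. Hence $C(t+1,u)\in S$. By the induction hypothesis $C(t,u)$ is a prefix of a key, and by the participant case just established every $P(t+1,v)$ is a prefix of a key, so all of $S$ consists of prefixes of keys; in particular the selected $C(t+1,u)$ is.

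The only nonroutine point is the observation that the candidate is an actual element of $S$ rather than a newly synthesized string — this is precisely where Proposition~\ref{min-PL} enters, and it is what allows the property ``prefix of a key'' to be inherited at each step. Everything else is a mechanical unwinding of the update rules, so I expect the write-up to be short once the simultaneous-induction framing and the membership argument are in place.
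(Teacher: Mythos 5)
Your proof is correct and follows essentially the same route as the paper's: induction on time, with participants handled as prefixes of (earlier) candidates via Proposition~\ref{participant=pref(candidate)}, and the new candidate observed to be literally an element of the set $\{C(t,u)\}\cup\{P(t+1,v):(v,u)\in E(G)\}$, hence a prefix of a key by the induction hypothesis. The only cosmetic differences are that you run a simultaneous induction on both assertions and make the membership point explicit through Proposition~\ref{min-PL}, whereas the paper reduces to candidates only and takes membership directly from the definition of the candidate update.
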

\begin{proof}
  Induction by the moments of time. It is sufficient to prove the
  statement for candidates only because participants are prefixes of
  candidates due to Proposition~\ref{participant=pref(candidate)}.

  The base case holds since initially all candidates are keys.

  Show that if the statement holds at moment~$t$, then it also holds
  at moment~$t+1$. 

  Participants of all nodes at moment  $t+1$ are prefixes of
  candidates at earlier times due to
  Proposition~\ref{participant=pref(candidate)}. By the induction
  hypothesis they are prefixes of keys. The candidate $C(t+1,u)$
  coincides 
  either  with $C(t,u)$ or with a participant $P(t+1,v)$.
\end{proof}

Let $K_{\min}$ be the minimal key in a network. We will use the
following fact about prefixes of the keys.

\begin{prop}\label{key-dichotomy}
  Let 
  $p$ be a prefix of a key  $K(w)$. Then either $K_{\min} \ls
  p$ or  $p$ is a prefix of $K_{\min}$.
\end{prop}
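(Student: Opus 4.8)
The plan is to prove Proposition~\ref{key-dichotomy} by a case analysis based on the PL-comparability of $p$ and $K_{\min}$. The key structural fact to exploit is the earlier remark (following Proposition~\ref{length-compare}) that the \emph{only} PL-incomparable pairs of strings are those in which one string is a prefix of the other. So I would first recall this dichotomy: for the two strings $p$ and $K_{\min}$, either they are PL-comparable, or one is a prefix of the other.

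First I would dispose of the comparable case. If $p$ and $K_{\min}$ are PL-comparable, then since $K_{\min}$ is the minimal key and $p$ is a prefix of the key $K(w)$, I want to rule out $p \ls K_{\min}$. The idea is that by Proposition~\ref{greedy-compare} (the ``greedy'' property $u\ls v \Rightarrow uw'\ls vw''$), if $p \ls K_{\min}$ held, then extending $p$ to the full key $K(w)$ would give $K(w) \ls K_{\min}$ (take $w' $ to be the suffix completing $p$ to $K(w)$ and $w''=\ld$), contradicting minimality of $K_{\min}$. Hence in the comparable case we must have $K_{\min} \ls p$ (equality is impossible unless $p=K_{\min}$, which still falls under ``$p$ is a prefix of $K_{\min}$'').

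Next I would handle the incomparable case, where one of $p$, $K_{\min}$ is a prefix of the other. If $p$ is a prefix of $K_{\min}$, we are immediately in the second alternative of the conclusion and done. The remaining subcase is that $K_{\min}$ is a proper prefix of $p$. Here I would use the defining property of keys from Section~\ref{id->key}: no key is a proper prefix of another key. Since $K_{\min}$ is a key and $p$ is a prefix of the key $K(w)$, having $K_{\min}$ be a proper prefix of $p$ would make $K_{\min}$ a proper prefix of $K(w)$ as well, contradicting that property. So this subcase cannot occur, and the incomparable case always yields ``$p$ is a prefix of $K_{\min}$''.

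The main obstacle I anticipate is being careful about the boundary between the two alternatives, namely the situation where $p$ equals $K_{\min}$ or where $p = \lcp$-related edge cases arise; these need to be routed cleanly into the stated disjunction (``$K_{\min}\ls p$ or $p$ is a prefix of $K_{\min}$''), which is \emph{inclusive} of equality via the prefix clause. Apart from that, the argument is essentially an assembly of Proposition~\ref{greedy-compare}, the incomparability characterization, and the no-proper-prefix property of keys, so no heavy computation is expected.
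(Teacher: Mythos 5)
Your proof is correct, but it takes a genuinely different route from the paper's. The paper argues directly: writing $K_{\min}=\lcp(K(w),K_{\min})\,0\,k'$ and $K(w)=\lcp(K(w),K_{\min})\,1\,k''$, it observes that any prefix $p$ of $K(w)$ that is \emph{not} a prefix of $K_{\min}$ must be longer than $\lcp(K(w),K_{\min})$, hence has the form $\lcp(K(w),K_{\min})\,1\,p'$, which gives $K_{\min}\ls p$ straight from the definition of the PL order. You instead split on the comparability dichotomy and dispatch the two bad branches with two separate tools: $p\ls K_{\min}$ is excluded by Proposition~\ref{greedy-compare} together with PL-minimality of $K_{\min}$ among the keys, and the possibility that $K_{\min}$ is a proper prefix of $p$ is excluded by prefix-freeness of keys. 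The underlying facts are the same --- the paper's displayed decomposition itself presupposes prefix-freeness, since it needs $K_{\min}$ and $K(w)$ to be PL-comparable when distinct --- but your version is more modular (pure lemma assembly, no position bookkeeping), and it explicitly covers the degenerate case $K(w)=K_{\min}$, which the paper's display tacitly skips as trivial. Two small remarks: the incomparability characterization you invoke is stated right after the definition of the PL order and used in the proof of Proposition~\ref{min-PL}, not in the remark following Proposition~\ref{length-compare} (that remark concerns the non-existence of a suitable linear order on all strings); and in your comparable case the equality $p=K_{\min}$ cannot actually occur there, since equal strings are PL-incomparable and already fall into the prefix branch.
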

\begin{proof}
  Look at the position that determines the result of comparison  $K_{\min}$
  and $K(w)$: 
  $$
  K_{\min} =\lcp(K(w),K_{\min}) 0 k',\quad K(w) =\lcp(K(w),K_{\min}) 1 k''. 
  $$ 
  Let $p$ be  a prefix of $K(w)$ such that  $p$ is not a prefix of
  $K_{\min}$.  Then  $p$ is longer than 
  $\lcp(K(w),K_{\min})$. Therefore $\lcp(p,K_{\min})=
  \lcp(K(w),K_{\min})$ and $p =\lcp(K(w),K_{\min}) 1 p' $. Thus
  $K_{\min}\ls p$. 
\end{proof}

\begin{cor}\label{Kmin-stability}
  If $C(t^*,u)=K_{\min}$, then $C(t,u)=K_{\min}$ for all
  $t> t^*$, $u\in V$.
\end{cor}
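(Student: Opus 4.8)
The plan is to combine three ingredients already established: the weak monotonicity of the candidate sequence (Proposition~\ref{monotone0}), the prefix dichotomy for keys (Proposition~\ref{key-dichotomy}), and the fact that candidates are always prefixes of keys (Proposition~\ref{participant=pref(key)}), together with the prefix-freeness of keys noted in Section~\ref{id->key}. Fix a node $u$ with $C(t^*,u)=K_{\min}$ and take an arbitrary $t>t^*$. Applying Proposition~\ref{monotone0} with the pair $t^*<t$, exactly one of two situations occurs: either $K_{\min}=C(t^*,u)$ is a prefix of $C(t,u)$, i.e.\ $K_{\min}\pref C(t,u)$, or $C(t,u)\ls C(t^*,u)=K_{\min}$. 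My goal is to rule out the second alternative outright and then to show that the first one forces equality.

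First I would eliminate the case $C(t,u)\ls K_{\min}$. The crucial point is that $C(t,u)$ is a prefix of some node key by Proposition~\ref{participant=pref(key)}, so Proposition~\ref{key-dichotomy} applies to it and yields that either $K_{\min}\ls C(t,u)$ or $C(t,u)$ is a prefix of $K_{\min}$. The first outcome is incompatible with $C(t,u)\ls K_{\min}$ since $\ls$ is a strict (antisymmetric) order, and the second is incompatible as well, because a string and one of its prefixes are never PL-comparable. Hence $C(t,u)\ls K_{\min}$ cannot hold, and we are necessarily in the remaining case $K_{\min}\pref C(t,u)$.

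It then remains to upgrade $K_{\min}\pref C(t,u)$ to $C(t,u)=K_{\min}$, and here I would invoke prefix-freeness of keys. Since $C(t,u)$ is a prefix of some key $K(w)$, chaining $K_{\min}\pref C(t,u)\pref K(w)$ shows that $K_{\min}$ is a prefix of the key $K(w)$; as no key is a proper prefix of another key, this forces $K(w)=K_{\min}$, and sandwiching $C(t,u)$ between $K_{\min}$ and $K(w)=K_{\min}$ gives $C(t,u)=K_{\min}$. As $t>t^*$ was arbitrary, the corollary follows. I do not expect a genuine obstacle; the only subtlety to keep in mind is that a candidate is merely a \emph{prefix} of a key rather than a key itself, which is precisely why both Proposition~\ref{key-dichotomy} and prefix-freeness are needed rather than a direct appeal to the minimality of $K_{\min}$ among keys.
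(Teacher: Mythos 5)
Your proof is correct, but it reaches the conclusion by a genuinely different route than the paper. The paper's proof is a one-step induction on time: assuming $C(t,u)=K_{\min}$, it reopens the candidate-update rule --- $C(t+1,u)$ is the longest PL-minimal string in a set containing $C(t,u)=K_{\min}$ and the participants --- and uses Propositions~\ref{participant=pref(key)} and~\ref{key-dichotomy} to see that every element of that set is either PL-greater than $K_{\min}$ or a prefix of $K_{\min}$, so the longest PL-minimal element is $K_{\min}$ itself. You never touch the update rule; you treat its consequences as already packaged in Proposition~\ref{monotone0}, whose dichotomy (prefix or strict PL-decrease) you apply directly between $t^*$ and an arbitrary $t>t^*$. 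You then kill the decrease branch with the same two propositions the paper uses, and --- this is the step with no counterpart in the paper --- upgrade $K_{\min}\pref C(t,u)$ to equality by sandwiching $C(t,u)$ between $K_{\min}$ and a key $K(w)$ and invoking prefix-freeness of keys; the paper never needs prefix-freeness here because its induction tracks the exact string at every step. The trade-off is clear: your argument is non-inductive and more modular, surviving any change to the update rule that preserves Propositions~\ref{monotone0}, \ref{participant=pref(key)} and~\ref{key-dichotomy}, while the paper's version is more self-contained, exhibiting $K_{\min}$ as a fixed point of the update rule itself. The auxiliary facts you rely on (asymmetry of $\ls$, PL-incomparability of a string with its own prefix, and that no key is a proper prefix of another) are all stated in the paper or immediate from the definition of the PL order, so there is no gap.
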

\begin{proof} 
  Show that  $C(t,u)=K_{\min}$ implies
  $C(t+1,u)=K_{\min}$.  By definition, $C(t+1,u)$ is the
  longest string among all PL-minimal strings in a set containing the
  candidate  $C(t,u)=K_{\min}$ and participants. The participants are
  prefixes of keys due to Proposition~\ref{participant=pref(key)}.
  Now apply Proposition~\ref{key-dichotomy}.
\end{proof}

We conclude from this corollary the following fact: if the candidate
of a node~$u$ equals the minimal key at moment $t^*$, then the node
$u$ will finish communication at some later moment. Indeed, if the
participant $P(t^*,u)$ is not a prefix of the minimal key, then a
correction message will be transmitted (since the candidate remains
unchanged, this correction message can not be aborted). After that the
node resumes transmission of remaining bits of the candidate (which
coincides with the minimal key). The node finishes communication after
sending the last bit of the candidate (the minimal key).

\begin{proof}[Proof of Lemma~\ref{weak-correctness}.]
  Due to Proposition~\ref{Kmin-stability}  it is sufficient to prove
  that in any node the candidate equals the minimal key at some moment
  of time.

  The proof is by induction on the distance $d$  from the node with
  the minimal key. 

  The base case $d=0$ is clear. In this case the candidate is the
  minimal key from the very beginning.

  For induction step, assume that at time $t$ in all nodes at distance
  $<d$ from the minimal one the candidates are equal to the minimal
  key. Take a node $v$ at distance $d$ from the minimal node.  The
  node $v$ has a neighbor $w$ at distance $d-1$ from the minimal node.
  As shown above, the node $w$ finishes transmission of the minimal
  key at some moment $t'\geq t$. At this moment, the participant
  $P(t',w)$ equals the minimal key. It means that   $C(t',v)$ is
  also the minimal key: from Propositions~\ref{greedy-compare}
  and~\ref{participant=pref(key)} we conclude that a prefix of a key
  is not lesser the minimal key in the PL order; thus the minimal
  elements of the set  $\{C(t'-1,u),
  P(t',v): (v,u)\in E(G) \}$ are prefixes of the minimal key and the
  minimal key is the longest among them. Applying the rule for
  determining the next candidate, we get $C(t',v)= K_{\min}$.
\end{proof}

\begin{remark}
  By similar arguments one can easily get a simple upper bound on the
  running time of the protocol, namely, $O(DL)$, where $D$ is the
  diameter of the network and $L$ is the length of the minimal key.

  In the next subsection we improve this bound.
\end{remark}

\subsection{Improved upper bound of the running time}

Assume that the candidate of a node equals the minimal key at some
moment of time $t$. After that moment the node is transmitting the bits of
the minimal key preceding by at most one  correction
message of the length  $O(\log L)$. For inductive bound similar to the
proof of Lemma~\ref{weak-correctness}, we need a lower bound of the
length of the prefix of the minimal key that has been sent by the node
before time~$t$.

This bound is based on monotonicity of the minimal key propagation
through a network. If a node has sent a prefix of the minimal key,
then the prefix will not be corrected in future.

We state this property formally and prove it. But it is convenient to
divide it in two statements.

\begin{prop}\label{monotone1}
  In each node the length of the longest common prefix of the
  candidate in the node and the minimal key does not decrease.
\end{prop}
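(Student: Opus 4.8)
The plan is to track, for each node $u$, the quantity $\varphi(t,u)=\big|\lcp(C(t,u),K_{\min})\big|$ and to show $\varphi(t,u)\le\varphi(t+1,u)$ for every $t$; the proposition then follows by induction on $t$. Two structural facts already established carry the whole argument: every candidate $C(t,u)$ is a prefix of some key (Proposition~\ref{participant=pref(key)}), and by Proposition~\ref{monotone0} a single step can change the candidate only in one of two ways --- either $C(t,u)\pref C(t+1,u)$ or $C(t+1,u)\ls C(t,u)$.

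First I would record a trivial monotonicity property of the longest common prefix: if $x\pref x'$ then $\lcp(x,z)\pref\lcp(x',z)$ for every $z$, since any common prefix of $x$ and $z$ is automatically a common prefix of $x'$ and $z$. Taking $z=K_{\min}$ settles the first case immediately: when $C(t,u)\pref C(t+1,u)$ we obtain $\lcp(C(t,u),K_{\min})\pref\lcp(C(t+1,u),K_{\min})$, hence $\varphi(t,u)\le\varphi(t+1,u)$.

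The substantial case is $C(t+1,u)\ls C(t,u)$. Writing $r=\lcp(C(t,u),C(t+1,u))$, we have $C(t,u)=r1\dots$ and $C(t+1,u)=r0\dots$, so both candidates share the prefix $r$, and the comparison of either with $K_{\min}$ is decided entirely by how $K_{\min}$ relates to $r$. I would split on this. If $K_{\min}$ and $r$ first differ at a position inside $r$, then $\lcp(C(t,u),K_{\min})=\lcp(r,K_{\min})=\lcp(C(t+1,u),K_{\min})$ and $\varphi$ is unchanged. If instead $r\pref K_{\min}$, I examine the bit $K_{\min}[|r|+1]$: it cannot be $1$, for then $K_{\min}=r1\dots$ together with $C(t+1,u)=r0\dots$ would give $C(t+1,u)\ls K_{\min}$, contradicting the fact that, by Proposition~\ref{key-dichotomy}, a prefix of a key is never strictly below $K_{\min}$. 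Hence $K_{\min}[|r|+1]=0$, so $K_{\min}$ agrees with $C(t+1,u)$ at position $|r|+1$ while diverging from $C(t,u)$ there, giving $\varphi(t+1,u)\ge|r|+1>|r|=\varphi(t,u)$.

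The only remaining configuration, $K_{\min}$ a proper prefix of $r$, I would discard directly: it would make $K_{\min}$ a proper prefix of the key of which $C(t,u)$ is a prefix, contradicting the prefix-freeness of keys noted in Section~\ref{id->key}. The delicate point --- and where I expect the real work to lie --- is precisely this case analysis around $r$: closing off each impossible branch by the correct interplay of the minimality of $K_{\min}$ (through Proposition~\ref{key-dichotomy}) and prefix-freeness, rather than relying on the informal picture of the minimal key ``propagating monotonically'' through the network.
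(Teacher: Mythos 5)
Your proof is correct and takes essentially the same route as the paper's: the only nontrivial case is $C(t+1,u)\ls C(t,u)$ (via Proposition~\ref{monotone0}), and the heart of the matter is that at the branching point $r$ the minimal key cannot follow the $1$-branch, because $C(t+1,u)$ extends to a key (Proposition~\ref{participant=pref(key)}) and no prefix of a key can be PL-below $K_{\min}$. The paper dispatches all of your sub-cases at once: writing $C(t,u)=r1q'$ and $C(t+1,u)=r0q''\pref K(w)$, it observes $K_{\min}\lseq K(w)\ls r1$, so $r1$ is not a prefix of $K_{\min}$, whence $\lcp(K_{\min},C(t,u))=\lcp(K_{\min},r)$ is a prefix of $r$ and hence of $C(t+1,u)$. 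One small hole to patch in your version: the branch ``$r\pref K_{\min}$'' includes $r=K_{\min}$, in which case the bit $K_{\min}[|r|+1]$ you examine does not exist; that sub-case is impossible by exactly the prefix-freeness argument you use for the last configuration, since $r1\pref C(t,u)\pref K(w)$ would make $K_{\min}=r$ a proper prefix of the key $K(w)$.
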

\begin{proof}
  Let $q_0 = C(t,u)$, $q_1=C(t+1,u)$  be two subsequent candidates in
  a~node. We prove that 
  $\lcp(K_{\min},q_0)$ is a prefix of $q_1$. 

  If $q_0$ is a prefix of $q_1$, then the statement is trivial. So, we
  assume that the opposite holds. Then  $q_1\ls q_0$ due to
  Proposition~\ref{monotone0}.  

  Let  $p $ be $\lcp(q_0,q_1)$. From the definition of the PL order we
  get 
  \[
  q_0 = p1q',\quad q_1=p0q''.
  \]
  By Proposition~\ref{participant=pref(key)}, the string $q_1$ is a
  prefix of a key $K(w)$. Thus,
  $$K_{\min}\lseq K(w)=q_1q'''=p0q''q'''\ls p1.$$
  Therefore $p1$ is not a prefix of $K_{\min}$. But it implies that
  $$\lcp(K_{\min},q_0) = \lcp(K_{\min}, p1) = \lcp(K_{\min},p)$$ 
  is a prefix of $p$, which is a prefix of 
  $q_1$.
\end{proof}

A similar statement holds for participants. But the proof is more
tricky because participants can be arbitrary truncated.

\begin{prop}\label{monotone2}
    In each node the length of the longest common prefix of
    a~participant in the node and the minimal key does not decrease.
\end{prop}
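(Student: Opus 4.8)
The plan is to prove that $\lcp(K_{\min}, P(t,u))$ cannot shrink as $t$ increases. The natural approach mirrors Proposition~\ref{monotone1}, but the difficulty is that a correction message can truncate a participant arbitrarily, so I cannot simply track a single monotone append operation. Instead I would argue case by case according to how $P(t+1,u)$ is derived from $P(t,u)$ by the update rules, and in each case show that the common prefix with $K_{\min}$ is preserved.

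First I would set $p = \lcp(K_{\min}, P(t,u))$ and show $p$ is a prefix of $P(t+1,u)$ in every case. During a regular period we have either $P(t+1,u) = P(t,u)b(t+1,u)$ or $P(t+1,u) = C(t,u)$; in the first case $P(t,u)$ itself is a prefix of $P(t+1,u)$, so $p$ survives trivially, and in the second case I would invoke Proposition~\ref{monotone1} together with the fact that $P(t,u)$ is a prefix of $C(t,u)$ to conclude that $p$ is a prefix of $C(t,u) = P(t+1,u)$. During an exceptional period, when $R(t,u)b(t+1,u)$ is a proper prefix of $L(t,u)$ or the message is aborted, the participant is unchanged, $P(t+1,u) = P(t,u)$, so there is nothing to prove.

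The genuinely delicate case, and the one I expect to be the main obstacle, is the end of an exceptional period, where $P(t+1,u) = \lcp(C(t,u), P(t,u))$ — here the participant is truncated. I must show that this truncation never cuts into the common prefix with $K_{\min}$, i.e. that $p = \lcp(K_{\min}, P(t,u))$ is still a prefix of $\lcp(C(t,u), P(t,u))$. The key observation is that $p$ is a prefix of $K_{\min}$ and, by Proposition~\ref{monotone1} applied to the current candidate, $p$ is also a common prefix of $C(t,u)$. Since $p$ is a prefix of both $P(t,u)$ (by definition of $p$) and $C(t,u)$, it is a prefix of their longest common prefix $\lcp(C(t,u), P(t,u)) = P(t+1,u)$, which is exactly what I need. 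The subtlety to handle carefully is ensuring the monotonicity of $\lcp(K_{\min}, C(t,u))$ from Proposition~\ref{monotone1} is correctly aligned in time with the participant update, since the candidate used in the truncation rule is $C(t,u)$ and not a later one.

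Having established that $p$ is a prefix of $P(t+1,u)$ in all cases, I would conclude that $\lcp(K_{\min}, P(t+1,u))$ has $p$ as a prefix, hence $|\lcp(K_{\min}, P(t+1,u))| \geq |p| = |\lcp(K_{\min}, P(t,u))|$, giving the desired non-decrease. The whole argument is a finite case analysis feeding off Proposition~\ref{monotone1} and Proposition~\ref{participant=pref(candidate)}, with the truncation case being the only place requiring real care.
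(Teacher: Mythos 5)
Your proof is correct, and your case decomposition matches the paper's (regular steps and mid-correction steps are immediate; the end of a correction is the only substantive case), but at that case you take a genuinely different route. You derive the truncation step from Proposition~\ref{monotone1} used as a black box: since $P(t,u)$ is a prefix of an earlier candidate $C(t',u)$ (Proposition~\ref{participant=pref(candidate)}), your $p=\lcp(K_{\min},P(t,u))$ is a prefix of $\lcp(K_{\min},C(t',u))$, which by Proposition~\ref{monotone1} is a prefix of $\lcp(K_{\min},C(t,u))$ (all these strings are prefixes of $K_{\min}$, so the length inequality upgrades to a prefix relation); hence $p$ is a common prefix of $C(t,u)$ and $P(t,u)$ and survives the truncation to $\lcp(C(t,u),P(t,u))=P(t+1,u)$. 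The paper never invokes Proposition~\ref{monotone1}; instead it reruns a direct string-level argument in the style of that proposition's proof: a non-empty correction means $P(t,u)$ is not a prefix of $C(t,u)$, so the dichotomy of Proposition~\ref{monotone0} forces $C(t,u)\ls C(t',u)$, i.e.\ $C(t,u)=q0q'$ while $P(t,u)$ starts with $q1$; then Proposition~\ref{participant=pref(key)} gives $K_{\min}\lseq K(w)\ls q1$, so $q1$ is not a prefix of $K_{\min}$ and every discarded bit lay beyond $\lcp(K_{\min},P(t,u))$. Your version is shorter and more modular; the paper's is self-contained and pins down the exact truncation point (the new participant is precisely the lcp of the two candidates, whose first discarded bit already disagrees with $K_{\min}$), quantitative information of the kind reused in the delay analysis of Lemma~\ref{delay-lemma}. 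One caution: your phrase ``by Proposition~\ref{monotone1} applied to the current candidate'' compresses the essential chain --- Proposition~\ref{monotone1} by itself says nothing about participants, and the step only works via the detour through the earlier candidate $C(t',u)$ supplied by Proposition~\ref{participant=pref(candidate)}, so that detour should be written out explicitly.
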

\begin{proof}
  Let  $q_0 = P(t,u)$, $q_1=P(t+1,u)$  be two subsequent participants in
  a~node. By the rules of the protocol, in a regular period  $q_0$ is a
  prefix of $q_1$ (possibly, $q_0=q_1$). At all moments of an
  exceptional period, with except for the last, 
  $q_0=q_1$  too. This last moment is the only nontrivial case in the
  proof. 

  At the end of an exceptional period the rules of the protocol give us
  \[
  q_1 = \lcp(C(t,u), P(t,u)).
  \]

  Now we prove that   $\lcp(K_{\min},P(t,u))$ is a prefix of
  $q_1$. By Proposition~\ref{participant=pref(candidate)},
  $P(t,u)$ is a prefix of some earlier candidate $C(t',u)$,
  $t'<t$. A~correction message is non-empty iff  $P(t,u)$ is not a
  prefix of  $C(t,u)$. It selects in the dichotomy of
  Proposition~\ref{monotone0} the case  $C(t,u)\ls
  C(t',u)$. Thus, 
  \[
  C(t,u) = p0p', \quad C(t',u) = p1p'',
  \]
  and $p1$ is a prefix of $P(t,u)$ (since $P(t,u)$ is not a prefix of
  $C(t,u)$). In other words, $q_1 = p$. Since the candidate $C(t,u)$
  is a prefix of a key  $K(w)$
  (Proposition~\ref{participant=pref(key)}) and
  \[
     p1\gs C(t,u),\qquad K(w) \gs K_{\min},
  \]
  $p1$ is not a prefix of  $K_{\min}$. Thus,  $\lcp(K_{\min},P(t,u))$
  is a prefix of~$p=q_1$. 
\end{proof}

Now we introduce the main tool in the time analysis of the protocol:
\emph{the delays} of the minimal key transmission. 
By definition, the
delay $\delta(t,v)$ in the node $v$ at time $t$ is $t-|\lcp(K_{\min},
P(t,v))|$, if the node has not finished the minimal key transmission at
moment~$t$, i.e., if $P(t,v)\ne K_{\min}$.  Otherwise, the delay is
undefined. 

By Lemma~\ref{weak-correctness}, the participant of a node equals the
minimal key at some moment of time. Therefore the delays in the node
are bounded. 
Let $\delta(v)$ be the maximal delay in a~node $v$. 

Note that a sequence of the delays  $\delta(t,v)$ in a node is
non-decreasing since the only one information bit is transmitted during
a step of protocol.

Also note that 
if the node
$v$ has not finished  transmission of the minimal key at time $t$, then its
participant $P(t,v)$ has a common prefix of the
length at least $t-\delta(v)$  with the minimal key.

\begin{lemma}\label{delay-lemma}
  If nodes $v$ and $w$ are neighbors in a network, then
  \begin{equation}
  \delta(v)\leq\delta(w)+|\bin(L)|,\label{delay-ineq}
\end{equation}
  where $L$ is the length of the minimal key.
\end{lemma}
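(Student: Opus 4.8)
The plan is to reduce the inequality on the maximal delays to a pointwise comparison of the \emph{correct prefix lengths} $|\lcp(K_{\min},P(t,v))|$ of the two neighbours. Concretely, I will prove that at every moment $t$ before $w$ has finished,
\[
|\lcp(K_{\min},P(t,v))| \geq |\lcp(K_{\min},P(t,w))| - |\bin(L)|,
\]
possibly up to a one-step shift in $t$. Granting this, the lemma follows immediately: choose $t^*$ realising the maximal delay in $v$, so that $\delta(v)=t^*-|\lcp(K_{\min},P(t^*,v))|$; substituting the pointwise bound gives $\delta(v)\leq t^*-|\lcp(K_{\min},P(t^*,w))|+|\bin(L)|=\delta(t^*,w)+|\bin(L)|\leq\delta(w)+|\bin(L)|$. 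If $w$ has already finished by $t^*$, its participant equals $K_{\min}$ and $v$ is even less constrained, so that case is easier.

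To establish the pointwise bound I split it through the candidate. First (\emph{propagation}) I show that $v$'s candidate is at least as correct as $w$'s participant, i.e.\ $|\lcp(K_{\min},C(t,v))|\geq|\lcp(K_{\min},P(t,w))|$. This is a zero-delay, one-hop transfer: by the evolution rules the candidate $C(t,v)$ is the longest PL-minimal string of a set that contains the participant $P(t,w)$ received from $w$, and by Proposition~\ref{key-dichotomy} every such string is either a prefix of $K_{\min}$ or PL-greater than $K_{\min}$; passing to the longest PL-minimal element (Proposition~\ref{min-PL}) can therefore only lengthen the agreement with $K_{\min}$.

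Second (\emph{tracking}) I show that inside a single node the participant's correct prefix lags its own candidate's correct prefix by at most one correction message, i.e.\ $|\lcp(K_{\min},P(t,v))|\geq|\lcp(K_{\min},C(t,v))|-|\bin(L)|$. Here the point of the key encoding pays off: a stale wrong suffix of the participant, however long, is truncated by a single correction message whose length is $|\bin(|\lcp(C(t,v),P(t,v))|)|\leq|\bin(L)|$, after which the participant is realigned as a prefix of the candidate and resumes extending the correct prefix bit by bit. The decisive ingredient is monotonicity: by Propositions~\ref{monotone1} and~\ref{monotone2} the correct prefixes of both the candidate and the participant never decrease, so once the candidate agrees with $K_{\min}$ up to a given length the node never re-enters that region with a wrong bit; hence at most one correction is charged rather than an accumulating sequence of them. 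Combining the two inequalities yields the pointwise bound, the one-step shift coming from the fact that $C(t,v)$ already sees $P(t,w)$ while $P(t+1,v)$ is the first participant able to reflect it.

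I expect the tracking step to be the main obstacle. The subtlety is precisely the bookkeeping of corrections: a priori a node could issue many correction messages, one each time its candidate improves, and one must rule out that their lengths add up. The resolution is to argue, via the monotonicity propositions, that between the moment $v$'s candidate first agrees with $K_{\min}$ up to length $m+1$ and the moment $v$ actually transmits the bit $K_{\min}\big[m+1\big]$, only one correction can be in progress, because any correction issued in this window truncates to a point inside the now frozen correct prefix and is never aborted in favour of a deeper one. A secondary, purely technical point is to fix the exact timing alignment dictated by the evolution rules~\eqref{evolution}, which only affects additive constants and is absorbed into $|\bin(L)|$.
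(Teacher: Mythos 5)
Your two invariants are both essentially true, and they give a genuinely different organization from the paper's proof: the paper anchors everything at a single moment $t_0$, the last time $v$'s participant is a prefix of its candidate yet disagrees with $K_{\min}$, bounds $t_0\leq|\lcp(P(t_0,v),K_{\min})|+\delta(w)$ by what you call propagation, and then charges the single correction that follows $t_0$; you instead prove a time-uniform lag bound and evaluate it at the delay-maximizing moment. Your propagation step is correct as stated, and your sketch of the tracking step identifies the right mechanism: an exceptional period can only begin when the participant's correct prefix has caught up with the candidate's, and once $|\lcp(K_{\min},C(t,v))|$ strictly exceeds $|\lcp(K_{\min},P(t,v))|$, every recomputed correction target equals $|\lcp(K_{\min},P(t,v))|$, so the pending correction is never aborted and completes within $|\bin(L)|$ steps, while in the phase where aborts are still possible the two correct prefixes coincide and the lag is zero. (This pinning observation is, incidentally, exactly what is needed to rule out a pile-up of aborted corrections, a point the paper's own proof passes over.) So the tracking invariant holds with constant $|\bin(L)|+O(1)$.

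The genuine gap is in your reduction, precisely in the case you wave away as ``easier''. Since the delays $\delta(t,v)$ are non-decreasing, the maximizing moment $t^*$ sits just before $v$ finishes transmitting $K_{\min}$; when $w$ is $v$'s neighbour in the direction of the minimal node --- the very situation in which the lemma is applied --- $w$ has typically finished \emph{before} that moment, so $\delta(t^*,w)$ is undefined and the substitution $\delta(t^*,w)\leq\delta(w)$ is unavailable. This is not a corner case but the generic one. Repairing it in the obvious way (use $\delta(w)\geq t_f-L$, where $t_f$ is the moment $w$ finishes, and bound how much $v$'s delay still grows after $t_f$) costs one further correction message, because $v$'s participant stays frozen during its own final correction while its delay keeps increasing; your route then yields only $\delta(v)\leq\delta(w)+2|\bin(L)|+O(1)$. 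That still gives $\delta(v)\leq\delta(w)+O(\log L)$, so Theorem~\ref{th:m-t-prot} survives, but the inequality~\eqref{delay-ineq} as stated is not proved. The paper's device of the last wrong bit is what removes this loss: after $w$ finishes, $v$'s candidate equals $K_{\min}$, so a participant that is a prefix of the candidate can contain no wrong bit; hence $t_0$ provably occurs while $w$ is still active, and only the one correction after $t_0$ is ever charged. For the same reason, your additive constants (corrections start at lag one, plus one possible restart) cannot simply be ``absorbed into $|\bin(L)|$'' if the target is the literal statement of the lemma.
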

\begin{proof}
  Let  $t_0$ be the last moment when a node  $v$ sends a wrong bit, i.e.
  \[
  t_0 = \max \Big(t : \big|\lcp(P(t,v),K_{\min})\big|<|P(t,v)|\ \text{and $P(t,v)$ is a
  prefix of $C(t,v)$}\Big).
  \]
  After that, during an exceptional period   $t_0<t\leq t_1$, correction
  messages are transmitted  (therefore $P(t,v) =P(t_0,v)$). 

  At the last moment of correction transmission $t_1$ the length of
  the participant decreases to 
  $\big|\lcp(P(t_0,v),K_{\min})\big|$, since by
  Proposition~\ref{monotone2} the length of the longest common prefix
  of a participant and the minimal key does not decrease but a new
  participant should be a prefix  of the minimal key.
  
  From a moment  $t_2\geq t_1$ the delays do not increase: the next
  bit of the minimal key is transmitted at the next step. As the delay
  function is monotone, we get
  \[\delta(v) \leq t_2 - \big|\lcp(P(t_0,v),K_{\min})\big|.\]

  We upperbound  $t_2 $ via the delay  $\delta(w)$ of a neighbor $w$
  of the node~$v$. As it shown above, either the participant $P(t,w)$
  has a common prefix with the minimal key that is not shorter than
  $\ell(t) = t-\delta(w)$, or this participant is the minimal key. The
  participant is in the set of strings, which is used in determining
  the next candidate of the node~$v$.   Therefore the candidate
  $C(t_0,v)$ has a common prefix with the minimal key that is not
  shorter than  $\min(\ell(t_0),L)$. On the other hand, this common
  prefix coincides with  $\lcp(P(t_0,v),K_{\min})\ne K_{\min}$ by
  construction: the moment $t_0$ is regular, thus $P(t_0,v)$ is a
  prefix of  $C(t_0,v)$. From this we get
  $$
  \big|\lcp(P(t_0,v),K_{\min})\big| \geq \ell(t_0),
  $$
  i.e., $t_0\leq\big|\lcp(P(t_0,v),K_{\min})\big|+\delta(w)$. 

  Note that for 
  $t\geq \big|\lcp(P(t_0,v),K_{\min})\big|+\delta(w)$ the length of
  the candidate increases at step  $t$ while the candidate differs
  from the minimal key. If  $t\geq t_1$ holds, then the length of the
  participant also increases while the participant   differs
  from the minimal key. Thus
  \[
  t_2\leq \max\Big( \big|\lcp(P(t_0,v),K_{\min})\big|+\delta(w), t_1\Big).
  \]

  The difference  $t_1-t_0$ equals the length of the last correction
  message 
  $$\bin\Big(\big|\lcp(P(t_0,v),K_{\min})\big|\Big),$$
  which is at most $|\bin(L)|$.

  Finally, we get
  \begin{multline*}
  \delta(v) \leq  t_2 - \big|\lcp(P(t_0,v),K_{\min})\big|\leq\\
  \leq
  \max\Big( \big|\lcp(P(t_0,v),K_{\min})\big|+\delta(w), t_1\Big) -
  \big|\lcp(P(t_0,v),K_{\min})\big| \leq \\ \leq
  \max\Big(\delta(w), t_1-t_0
  +t_0-\big|\lcp(P(t_0,v),K_{\min})\big|\Big)
  \leq   \delta(w) +|\bin(L)|.
  \end{multline*}
  It completes the proof.
\end{proof}

\begin{lemma}
  The running time of the protocol is  $ O(L+D\log L)$.
\end{lemma}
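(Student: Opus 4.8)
The plan is to convert the neighbour-to-neighbour estimate of Lemma~\ref{delay-lemma} into a global bound on the maximal delay $\delta(v)$ at every node, and then to read off the termination time of each node directly from its delay. Once I know that $v$ finishes transmitting $K_{\min}$ by a time of the form $L+\delta(v)$, the result follows by taking the maximum over all nodes.

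First I would pin down the delay at the source. Let $r$ be the node holding $K_{\min}$. Its candidate equals $K_{\min}$ from $t=0$ and stays so by Corollary~\ref{Kmin-stability}, so $r$ only ever transmits bits of $K_{\min}$; consequently every participant $P(t,r)$ is a prefix of $K_{\min}$, no correction is ever triggered at $r$, and $|\lcp(K_{\min},P(t,r))|=|P(t,r)|=t$ as long as $P(t,r)\neq K_{\min}$. Hence $\delta(t,r)=0$ for all relevant $t$, and so $\delta(r)=0$. Next I would propagate this along shortest paths. Fix any node $v$ and a shortest path $r=u_0,u_1,\dots,u_k=v$ with $k=\mathrm{dist}(r,v)\le D$. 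Applying Lemma~\ref{delay-lemma} to each consecutive pair gives $\delta(u_{i+1})\le\delta(u_i)+|\bin(L)|$, so by induction $\delta(v)\le k\,|\bin(L)|\le D\,|\bin(L)|$. Since $|\bin(L)|=O(\log L)$, this yields $\delta(v)=O(D\log L)$ for every node $v$.

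Finally I would translate the delay bound into a termination time. Using the observation already recorded just before Lemma~\ref{delay-lemma}, as long as $P(t,v)\neq K_{\min}$ we have $|\lcp(K_{\min},P(t,v))|\ge t-\delta(v)$. Setting $t=L+\delta(v)$ would force $|\lcp(K_{\min},P(t,v))|\ge L$; since $K_{\min}$ has length $L$ and no key is a proper prefix of another key, this is possible only when $P(t,v)=K_{\min}$. Thus every node $v$ reaches $P(t,v)=K_{\min}$ by time $L+\delta(v)$, and at that moment its candidate equals $K_{\min}$ as well: by Proposition~\ref{key-dichotomy} and Proposition~\ref{participant=pref(key)} the candidate is the longest PL-minimal prefix of a key in a set containing $K_{\min}$, hence equals $K_{\min}$. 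Then $P(t,v)=C(t,v)=K_{\min}$, so $v$ is in a sleep state. Taking the maximum over all nodes, the running time is at most $L+\max_v\delta(v)=O(L+D\log L)$.

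The hard part will be the last translation step rather than the induction. One must be certain that a participant whose common prefix with $K_{\min}$ has already reached length $L$ is in fact \emph{equal} to $K_{\min}$ (not merely agreeing on a long prefix), and that the equality $P=C=K_{\min}$ genuinely certifies the sleep state used to define the running time. Both points rest on the prefix-free structure of keys together with Proposition~\ref{participant=pref(key)} and Proposition~\ref{key-dichotomy}; once $\delta(r)=0$ is established, the delay propagation is a routine induction on distance.
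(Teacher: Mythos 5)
Your proof is correct and takes essentially the same route as the paper's: the delay at the minimal-key node is $0$, Lemma~\ref{delay-lemma} applied along a shortest path gives $\delta(v)=O(D\log L)$ for every node, and the delay bound is then converted into a per-node termination time of the form $L+\delta(v)$. The paper compresses this into a few lines, leaving implicit exactly the step you spell out — that $|\lcp(K_{\min},P(t,v))|\geq L$ forces $P(t,v)=C(t,v)=K_{\min}$ and hence a sleep state, via prefix-freeness of keys together with Propositions~\ref{participant=pref(key)}, \ref{key-dichotomy} and Corollary~\ref{Kmin-stability} — so your write-up is a faithful, more detailed rendering of the same argument.
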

\begin{proof}
  The delay in the node with the minimal key is $0$. 
  Inequality~\eqref{delay-ineq} implies that for a node at distance
  $h$ from the minimal node the delay is $O(h\log L)$. So the maximal
  delay over all nodes is $O(D\log L)$. Thus each node finishes 
  transmission of the minimal key in time $ O(L+D\log L)$ and the
  running time of the protocol has the same bound.
\end{proof}

\subsection{Spanning tree construction}\label{span-tree}

To  complete the proof of Theorem~\ref{th:m-t-prot},
we need to show that the above protocol constructs a spanning tree.

After finishing the protocol, all nodes are in sleep states and their
participants and candidates are equal to the minimal key. One can
define a spanning tree at this moment by the following rule: a node
looks at the moment, when it has received the last bit of the minimal key,
and chooses as a parent a node that has sent this bit (a choice between
several nodes satisfying this condition is
arbitrary).  The node having the minimal key is the root. We obtain in
this way a directed graph (a digraph). Fan-outs of all nodes in the digraph
$\leq1$ and there are no directed cycles in it
(to send the last bit of the minimal key, the node should receive it;
thus the moments of transition to a sleep state form a strictly
decreasing sequence). 

But for processor terminating protocol we need a more general
construction. Not only the minimal key is propagating through  a
network. It is possible that at some moment the protocol is stopped in
a part of the network but the candidates in this part differ from the
minimal key. Recall that no key is a~prefix of another
key. By Proposition~\ref{participant=pref(key)} candidates are
prefixes of keys. Thus a~node can distinguish the moments of time when
its candidate is a key as well as neighbors who sent the last bit of
the key.

For a node $v$ we define a sequence of keys
\begin{equation}\label{key-seq}
K(v) = K_0\gs K_1\gs \dots\gs K_{\text{\tiny last}} = K_{\min},
\end{equation}
passing through the node at some moment of time. The
sequence~\eqref{key-seq} is decreasing due to
Proposition~\ref{monotone0} (recall that the PL order is a linear
order on the set of keys). The sequence is non-empty, since initially
the candidate is the key of the node.

For a key $K$ we denote by $\tau_K(v)$ the first moment $t$ such that
$K = C(t,v)$. If $K\ne C(t,v)$ for all $t$, then $\tau_K(v) = \infty$.
By definition,  $\tau_{K(v)}(v)=0$ and other values of 
$\tau_{K}(v)$ are either positive or infinite. 

If  $K\ne K(v)$ and $\tau_K(v)<\infty$, then $C(\tau_K(v)-1,v)\ne
C(\tau_K(v),v)=K$. By the rule of determining the candidate, we get
 $C(\tau_K(v),v) = P(\tau_K(v),w_K(v))$ for some neighbor~$w_K(v)$ of the 
node~$v$. Actually, it might be several neighbors satisfying this
condition. Let choose one of them  arbitrary. We say that the
node 
$v$ receives  (the last bit of) the key $K$ from the node~$w_K(v)$. 

\begin{prop}\label{birth-seq}
  $\tau_K(w_K(v))<\tau_K(v)$.
\end{prop}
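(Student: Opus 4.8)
The plan is to show that the node $w_K(v)$ from which $v$ receives the key $K$ must itself have held $K$ as a candidate \emph{strictly earlier} than $v$ did. The key observation is that $C(\tau_K(v),v) = K$ arose by copying the participant $P(\tau_K(v), w_K(v))$, so I would analyze what this participant equals and when $w_K(v)$'s own candidate became $K$.

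First I would unpack the definition of $w_K(v)$: by construction $\tau_K(v)$ is finite, $K\ne K(v)$, and $C(\tau_K(v),v) = P(\tau_K(v), w_K(v))$. Since this participant equals the key $K$ (which is a complete key, not a proper prefix of one), and by Proposition~\ref{participant=pref(candidate)} every participant $P(t+1,u)$ is a prefix of some earlier candidate $C(t',u)$ with $t'<t$, I get that $P(\tau_K(v),w_K(v)) = K$ is a prefix of some candidate $C(t',w_K(v))$ with $t' < \tau_K(v)$. Because $K$ is a full key and no key is a proper prefix of another key (one of the listed encoding properties) while candidates are themselves prefixes of keys (Proposition~\ref{participant=pref(key)}), the only way $K$ can be a prefix of the prefix-of-a-key $C(t',w_K(v))$ is that $C(t',w_K(v)) = K$ exactly. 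Hence $w_K(v)$ held $K$ as a candidate at some time $t' < \tau_K(v)$.

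It then remains to conclude $\tau_K(w_K(v)) \le t' < \tau_K(v)$. By definition $\tau_K(w_K(v))$ is the \emph{first} moment at which the candidate of $w_K(v)$ equals $K$, so $\tau_K(w_K(v)) \le t'$, giving the strict inequality $\tau_K(w_K(v)) < \tau_K(v)$ as desired. I would take care to note that $\tau_K(w_K(v))$ is finite here (it is at most $t'$, which is finite), so the comparison is between two finite quantities and the statement is meaningful.

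The main obstacle I anticipate is pinning down precisely why the participant $P(\tau_K(v),w_K(v))$ equals the full key $K$ rather than merely a prefix of it, and then why being a prefix of an earlier candidate forces that earlier candidate to be $K$ exactly. Both points hinge on the encoding property that keys are prefix-free together with Proposition~\ref{participant=pref(key)}; I would state these invocations explicitly rather than leaving them implicit, since the whole argument collapses without the prefix-free structure of keys. The timing bookkeeping (distinguishing $t'$, the first-occurrence time $\tau_K$, and the offset in Proposition~\ref{participant=pref(candidate)}) is routine once the prefix-freeness is used correctly.
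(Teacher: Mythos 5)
Your proof is correct and follows essentially the same route as the paper: the participant $P(\tau_K(v),w_K(v))$ equals the full key $K$, Proposition~\ref{participant=pref(candidate)} places it as a prefix of an earlier candidate of $w_K(v)$, and prefix-freeness of keys together with Proposition~\ref{participant=pref(key)} forces that earlier candidate to equal $K$ exactly, giving $\tau_K(w_K(v))\le t'<\tau_K(v)$. The paper's own proof is just a terser statement of the same argument, leaving the prefix-freeness step implicit, which you rightly identify as the crux and spell out.
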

\begin{proof}
  The sequence of candidates is non-increasing in the PL
  order. A~participant of a node is a prefix of an earlier candidate of the
  node  (Proposition~\ref{participant=pref(candidate)}). Therefore the
  key $K$ is contained in the sequence of the keys passing through the
  node  $w_K(v)$. Moreover, it follows from 
  $C(\tau_K(v),v) = P(\tau_K(v),w_K(v))$ that $\tau_K(w_K(v))<\tau_K(v)$.
\end{proof}

Note that  $w_K(v)$ is a function of the communication history 
$H(v,t)$
in the node  $v$ for  $t\geq \tau_K(v)$.

Thus, to define a function  $\pi(H)$ from the definition of a protocol
constructing a spanning tree, we set  $\pi(H(v,t)):=w_K(v)$, where $K$
is the minimal key among the keys that has passed through the node  $v$ up
to the moment~$t$. Formally, 
$K=K_i$ and $K_i$ is in the key sequence for the node $v$ such that
$\tau_K(v)\leq 
t$ and if  $K\ne K_{\min}$, then $t< \tau_{K_{i+1}}(v)$.  If 
$K=K(v)$, then $\pi(H(v,t))=\root$.

At the moment $T$, when the protocol is finished, the candidate of any
node is the minimal key $K_{\min}$. For the node $r$ having the
minimal key, the function $\pi(H(r,T))$ defined above has the value
$\root$. In other nodes $\pi(H(v,T))$ points to
$w_{K_{\min}}(v)$. Thus $\pi(H(v,T))$ defines a digraph and fan-outs
of all nodes are $\leq1$. Along a directed path in the digraph, the
values $\tau_{K_{\min}}(v)$ are strictly decreasing
(Proposition~\ref{birth-seq}). So, there are no cycles in the digraph
and it is a rooted tree.

This argument completes the proof of Theorem~\ref{th:m-t-prot}.

\section{Processor terminating protocols}\label{finalizing}

In this section we prove Theorem~\ref{th:p-t-prot}. To define processor
terminating protocols that broadcast the minimal identifier and
construct a spanning tree we use the function $\pi(H)$ described in
the Subsection~\ref{span-tree} and a structure relying on this
function. For the proofs we also need auxiliary trees in a
network. These trees are defined in the next subsection. After that we
provide a protocol description and the proof of correctness.

\subsection{Dynamic and static trees}

An analysis of a protocol proposed below is based on a relation
$u=w_K(v)$ ``a~node $v$ receives a key $K$ from a node~$u$'' and
changes of the relation in time. For this purpose we define a digraph
$\Gamma(t)$ including in it all edges in the form $(v, \pi(H(v,t)))$,
$\pi(H(v,t))\ne\root$.

Let  $(v,u)$ be an edge of the digraph $\Gamma(t)$. It means by
definition that  $u = w_K(v)$ for some key  $K$ and the key $K$ is the
minimal (in the PL order) key that has passed through the node~$v$ up to a
moment~$t$. 
We say that  $K$ is the 
\emph{edge key} and  $\tau_K(v)$ is the
\emph{birth date} of the edge.

Initially,   $\Gamma(0)$ consists of isolated roots. At the end, as it
was shown above, it is a spanning tree. For the proof of correctness
we will need a stronger property.

\begin{lemma}\label{forest-lemma}
  The digraph  $\Gamma(t)$ is a spanning forest for any  $t$.
\end{lemma}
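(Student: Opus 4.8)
The plan is to establish the three defining properties of a spanning forest for $\Gamma(t)$: that it covers all of $V$, that every vertex has out-degree at most one, and that it contains no directed cycle. The first two are immediate from the construction, since $\Gamma(t)$ is defined on the whole vertex set and each node $v$ contributes the single edge $(v,\pi(H(v,t)))$ when $\pi(H(v,t))\neq\root$ and no edge (a root) otherwise. Since a digraph with out-degree at most one and no directed cycle is necessarily a forest — following out-edges from any node either terminates at a sink or revisits a vertex, and the latter produces a directed cycle — the entire content of the lemma reduces to acyclicity, and this is where the work lies.

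To control cycles I would attach to every edge of $\Gamma(t)$ its \emph{edge key} as introduced before the statement: the edge $(v,u)$ carries the edge key $K$, which is the PL-minimal key that has passed through $v$ up to time $t$, and $u=w_K(v)$. The crucial observation is that edge keys are weakly non-increasing in the PL order along directed paths. Indeed, if $(v,u)$ has edge key $K$, then Proposition~\ref{birth-seq} gives $\tau_K(u)<\tau_K(v)\leq t$, so $K$ is among the keys that have passed through $u$ by time $t$; since the edge key $K'$ of the edge leaving $u$ is by definition the PL-minimal key through $u$ up to $t$, we conclude $K'\lseq K$.

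Now suppose, for contradiction, that $\Gamma(t)$ contained a directed cycle $v_0\to v_1\to\dots\to v_{n-1}\to v_0$ (write $v_n=v_0$), with edge keys $\kappa_0,\dots,\kappa_{n-1}$. Applying the observation to each edge gives $\kappa_{i+1}\lseq\kappa_i$ for all $i$ (indices mod $n$), so chaining around the cycle yields
\[
\kappa_0\lseq\kappa_{n-1}\lseq\dots\lseq\kappa_1\lseq\kappa_0.
\]
Because the PL order restricted to keys is linear, hence antisymmetric, all the $\kappa_i$ coincide with a single key $K$. Every edge of the cycle then carries the same key $K$, which is finite since it has passed through each $v_i$. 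Proposition~\ref{birth-seq} applied to each edge $(v_i,v_{i+1})=(v_i,w_K(v_i))$ gives $\tau_K(v_{i+1})<\tau_K(v_i)$, and chaining around the cycle produces
\[
\tau_K(v_0)<\tau_K(v_{n-1})<\dots<\tau_K(v_1)<\tau_K(v_0),
\]
a contradiction. Hence $\Gamma(t)$ has no directed cycle and is a spanning forest.

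The main obstacle, and the point where this argument must go beyond the acyclicity proof already given for the final spanning tree in Subsection~\ref{span-tree}, is precisely the nonuniformity of edge keys at an arbitrary intermediate time $t$: there the uniform edge key $K_{\min}$ let one invoke Proposition~\ref{birth-seq} directly, whereas now distinct edges may carry distinct keys. The monotonicity of edge keys along directed paths (the second paragraph) is what resolves this, forcing the edge keys around any cycle to collapse to a common value so that the birth-date argument applies verbatim.
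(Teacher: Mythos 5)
Your proof is correct and follows essentially the same route as the paper: the paper also reduces the lemma to acyclicity of the digraph and observes that a transition from an edge $(v,u)$ to an edge $(u,w)$ strictly decreases the pair (edge key, birth date) — the edge key weakly decreases because $\tau_K(u)<\tau_K(v)\leq t$ by Proposition~\ref{birth-seq}, and when the keys coincide the birth date strictly decreases — which bounds directed path lengths and excludes cycles. Your formulation as a contradiction around a cycle (edge keys collapse to one key by antisymmetry, then birth dates strictly decrease around the cycle) is just a repackaging of that same monotonicity argument.
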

\begin{proof}
  The fan-outs of the nodes are  $\leq1$ in this digraph by
  construction. Therefore the only cycles in the underlying undirected
  graph are the directed cycles in the digraph  $\Gamma(t)$. 
  To prove that there are no directed cycles we choose an appropriate
  monotonicity property.

  A~transition from an edge $(v,u)$ to an edge 
  $(u,w)$ in the digraph  $\Gamma(t)$  decreases either the edge key
  (the sequence of keys is monotone in each node) or the birth date
  (Proposition~\ref{birth-seq}).

  Thus the lengths of directed paths in $\Gamma(t)$ are bounded and
  the digraph does not contain directed cycles.
\end{proof}

Connected components of the digraph 
$\Gamma(t)$ are rooted trees  (the roots are nodes with fan-out~0). 
We call them  \emph{dynamic trees} in the network. By  $\Gamma(t,K)$
we denote the connected component that has a root $r$ with the key 
$K=K(r)$.

In the proof below we will use also  \emph{static} trees corresponding
to keys. Let $K=K(r)$. Then the vertices of the rooted tree 
$T_K$ are the nodes such that $\tau_K(v)<\infty$. The node~$r$ is the
root, a~parent of a node $v$ is the node
 $w_K(v)$. Since $\tau_K(v)$ is decreasing on a move from a child
to a parent, the graph  $T_K$ is a tree indeed.
We call these tree static because they are defined by the whole
communication history. For some keys the tree  $T_K$ consists of the
root with the key~$K$ only. For the minimal key $K_{\min}$
the static tree 
$T_{K_{\min}}$ coincides with the spanning tree constructed by the message
terminating protocol from the previous section.

On Figure~\ref{Gt} two components of a dynamic tree
$\Gamma(t)$ are shown. Figures~\ref{TK1}--\ref{TK2} show the
corresponding parts of static trees  $T_{K_1}$ and $T_{K_2}$. The
roots of the trees are white circles. Links of the network that are not
included in trees are pictured by dashed lines.
 
\begin{figure}
\noindent
  \begin{minipage}[b]{0.34\textwidth}
    \centerline{\mpfile{trees}{1}}
    \caption{$\Gamma(t)$, $K_2\ls K_1$}\label{Gt}
  \end{minipage}\hfill
  \begin{minipage}[b]{0.31\textwidth}
    \centerline{\mpfile{trees}{2}}
    \caption{$T_{K_1}$}\label{TK1}
  \end{minipage}\hfill
  \begin{minipage}[b]{0.31\textwidth}
     \centerline{\mpfile{trees}{3}}
    \caption{$T_{K_2}$}\label{TK2}
  \end{minipage}\hfill
  \parfillskip=0pt\par
\end{figure}

It follows from the definition that a node $v$ is included in a tree
$T_{K}$ iff the candidate of the node equals the key~$K$ at some
moment. 

\begin{remark}\label{conserve}
Moreover, 
an edge $(v,u)$ of a static tree $T_K$ is in the digraph
 $\Gamma(t)$ at all moments $t$ such that the key $K$ is the
PL-minimal key among the keys passed through the node~$v$. This remark
will be used in the proof below.
\end{remark}

\subsection{Description of the processor terminating protocol}

The required protocol is combined from three protocols $A$, $B$,
$C$. Its operation is divided in stages. During each stage the
combined protocol performs one step of each protocol
 $A$, $B$,
$C$ in the indicated order. Thus, each protocol is operating 3 times slower. It
does not affect the asymptotic bound in Theorem~\ref{th:p-t-prot}. 
In the sequel, we refer to stages of the combined protocol. After
performing  $t$ stages of the combined protocol each protocol $A$,
$B$, $C$ has performed  $t$ steps.

The protocol $A$ is the message terminating protocol described in the previous
section. It broadcasts the minimal identifier (formally, it broadcasts
the minimal key, but an identifier can be recovered by a key).

The second protocol $B$ maintains the spanning forest $\Gamma(t)$
defined in Subsection~\ref{span-tree}. 
In the protocol $B$ nodes send messages of two types: ``I'm a child''
and ``I'm not a child''. 

The first message is sent by a node $v$ at that moment when its
candidate $C(t,v)$ becomes a key $K\ne K(v)$. The node $v$ sends this
message to the neighbor $w_K(v)$, which sent the last bit of the
key~$K$. If the node $v$ has a parent $w\ne w_K(v)$ at this moment
(i.e., in the tree $\Gamma(t-1)$), then the node $v$ also sends the
message of the second type to the node~$w$.  Otherwise, the node sends
no message (formally, the empty message).

Since a step of the protocol $A$ is followed by a step of the protocol $B$,
before a step of the protocol $C$ 
nodes have correct information about their childs in the forest
$\Gamma(t)$.

In the finalizing protocol $C$ each node $v$ checks the \emph{local
termination condition}: 
communication with the neighbors in the protocol $A$ is
finished, the empty message is received in the protocol $B$, and the
candidate  and all participants of the node $v$ are equal to some key~$K$.

In the protocol $C$ nodes send messages through the edges of the
spanning forest $\Gamma(t)$. There are two types of messages: the confirmation (of
local finish) message and the termination message. A~node also can
send no message.

Divide nodes in a dynamic tree 
$\Gamma(t,K)$ 
into three
groups: the root has fan-out~$0$, the intermediate nodes have childs
(fan-in $>0$) 
and the leaves have no childs (fan-in $0$). 

Communication in
the protocol $C$ starts in leaves. A~leaf sends the confirmation
message to its parent if the local termination condition becames true
for the leaf.

An intermediate node collects confirmation messages from its
childs. It sends the confirmation message to its parent if it has
received confirmations from all its childs and the node itself
satisfies the local termination condition.

It is possible that the local termination condition changes the value
from true to false: a node exchanged a key with all neighbors but the
key is not the minimal one. If  the candidate of a node is decreased in
the PL order after a step of the protocol $A$
or the node receives the message ``I'm not a child'' after a step of
the protocol $B$, then the node removes all confirmations received. In
further operation, it
ignores all incoming confirmations until its candidate becomes a key. 

The root also collects confirmations from its childs and checks the
local termination condition.  The root sends the
termination message to all its childs if all childs of the root have
sent confirmations and the root satisfies the local termination
condition. 

The termination messages broadcast from parents to childs
through the tree. After
sending the termination message a node goes to the final state.

\subsection{Correctness of the combined protocol}

\begin{lemma}
  The combined protocol is processor terminating: \textup{(1)} if a
  node is in the 
  final state at time $t$,
  then the protocol $A$ has finished at time $t$; \textup{(2)} all nodes are in the final
  state at some moment of time.
\end{lemma}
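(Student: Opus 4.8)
The plan is to prove the two claims separately, treating (1) as a safety property and (2) as a liveness property, relying throughout on the behaviour of protocol $A$ established in the previous section together with the bookkeeping performed by $B$ and $C$.

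For part (1) I would argue as follows. A node enters the final state only immediately after it has sent a termination message, and a termination message is sent either by a node that has received one from its parent or by the root of a dynamic tree. Tracing the cascade of termination messages backwards along the edges of $\Gamma(t)$, I isolate the root $r$ that emitted the first termination message of the cascade, at some stage $t_0\le t$. By the rules of $C$, at that stage $r$ satisfies the local termination condition and every child of $r$ carries an \emph{intact} confirmation. The crux is to show that the edge key of this component equals $K_{\min}$. Suppose instead it equals $K=K(r)\ne K_{\min}$. Then local termination forces the candidate and all participants of $r$ to equal $K$, and, pushing this down through the chain of intact confirmations, the same holds at every node of the component. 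Using Proposition~\ref{participant=pref(key)} (participants are prefixes of keys) together with the correction mechanism of $A$, I would show that a participant equal to the \emph{whole} key $K$ on a boundary edge forces the neighbour on the other side also to have candidate $K$ at that stage; hence no strictly smaller key can be pending anywhere adjacent to the component. The component is thus closed under the network's edges, so $K$ is the globally minimal key, contradicting $K\ne K_{\min}$. Therefore $K=K_{\min}$, the candidate of $r$ is $K_{\min}$, and since every node of the component has local termination true (in particular its $A$-communication finished with candidate $K_{\min}$), protocol $A$ has finished by time $t$.

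For part (2) I would first invoke Lemma~\ref{weak-correctness} and the running-time bound of the preceding subsection: at some stage $T_A=O(L+D\log L)$ every candidate and every participant equals $K_{\min}$ and $A$ rests in its terminal sleep state. Corollary~\ref{Kmin-stability} then keeps all candidates at $K_{\min}$ forever, so by Remark~\ref{conserve} the forest $\Gamma(t)$ freezes to the static spanning tree $T_{K_{\min}}$; protocol $B$ thereafter emits only empty messages and never again produces an ``I'm not a child'' message. Consequently the local termination condition becomes \emph{permanently} true at every node, so after $T_A$ no confirmation is ever wiped. Once confirmations are stable they propagate upward in $T_{K_{\min}}$ one tree level per stage: each leaf confirms, each intermediate node confirms after receiving confirmations from all its children, and the minimal node (the root of $T_{K_{\min}}$) collects all confirmations after a number of stages proportional to the tree depth, which is $O(L+D\log L)$. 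The root then emits a termination message, which broadcasts down $T_{K_{\min}}$ one level per stage, each node entering the final state immediately after forwarding it. Since $T_{K_{\min}}$ is spanning, all nodes reach the final state.

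The main obstacle I expect is the safety argument in part (1): ruling out a premature termination cascade launched by the root of a dynamic tree whose edge key $K$ is not yet $K_{\min}$ while a smaller key is still in transit. The confirmation-wiping rule (triggered by a candidate decrease or by an ``I'm not a child'' message) is precisely what forbids this, and the delicate step is converting ``all participants across a whole component equal the full key $K$'' into ``$K=K_{\min}$'', which requires combining Proposition~\ref{participant=pref(key)} with the observation that any strictly smaller key still propagating would have forced a correction message shortening some boundary participant below $K$.
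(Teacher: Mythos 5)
Your part~(2) is essentially the paper's own argument (once protocol $A$ terminates, the forest freezes to the spanning tree $T_{K_{\min}}$, local termination holds everywhere and permanently, confirmations rise to the root and the termination message descends), and it is correct. The gap is in part~(1), exactly at the step you yourself flag as delicate, and your proposed resolution of it does not work. You argue: every node of the component held local termination when it sent its still-intact confirmation; local termination forces the participants on boundary edges to equal the full key $K$; hence the external neighbours had candidate $K$ ``at that stage''; hence no smaller key is pending adjacent to the component, the component is closed under the network's edges, and $K=K_{\min}$. The flaw is a race in time. The local-termination checks at different nodes of the component happen at \emph{different, earlier} moments, and a neighbour whose candidate was $K$ when some boundary node confirmed may \emph{afterwards} receive a strictly smaller key. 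When that happens, the invalidation spreads into the component only at the speed of protocol $A$ (a correction message of length $O(\log L)$, then bits), and the ``I'm not a child'' message is emitted only once the new key has arrived in full; meanwhile the stale confirmations already sent keep climbing one edge per stage toward the root. Your argument gives no reason why the root cannot fire on these stale confirmations before the invalidation wave catches up: ``closed under the network's edges'' is asserted at time $t$, but what you actually establish holds only pointwise at scattered earlier times.

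The paper closes precisely this hole with its case~\ref{bad-neighbors}: among the neighbours of $\Gamma(t,K)$ whose candidate was $K$ at an earlier moment, it picks the one, $w$, that left the dynamic tree \emph{earliest}, takes the path $w=v_0,v_1,\dots,v_k=r$ in the \emph{static} tree $T_K$, and proves by induction along this path that no confirmation was ever passed through its edges: while $w$ was still in the dynamic tree the conditions of case~\ref{good-neighbors} held, so $w$ never confirmed; and for the node $v_i$ nearest to $w$ that allegedly confirmed, either $v_{i-1}$ was still its child in $\Gamma(t')$ (so the confirmation from $v_{i-1}$ is missing), or $C(t',v_{i-1})\ls K$, which either breaks $v_i$'s local termination or delivers ``I'm not a child'' --- a contradiction in every case. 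Nothing in your sketch plays the role of this earliest-defector choice and path induction; your boundary argument covers the paper's cases~\ref{cand-less} and~\ref{good-neighbors} but not case~\ref{bad-neighbors}. A second, smaller omission: you speak of ``the edge key of this component'' as if all edges of $\Gamma(t,K)$ carried the key $K$; the paper must separately rule out edges whose keys differ from $K$ (its case~\ref{key-geq}), and that too requires an argument that confirmations are blocked on such edges.
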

\begin{proof}
  To prove the first statement of the lemma, we show that if a root of a
  dynamic tree  $\Gamma(t,K)$ sends the termination message, then the
  tree is spanning.  

  We say that a confirmation is passed through an edge  $(v,u)$ with
  the edge key $K'$ if at some moment 
  $ \tau_{K'}(v)\leq t'<t$ the node  $v$ sends the confirmation
  message to the node~$u$.

  Nodes remove confirmations if their candidates decrease or the
  message ``I'm not a child'' is received. Thus it is sufficient to
  take into account only those confirmations that were passed through
  the edges of the tree $\Gamma(t,K)$ with the current edge keys 
  (but an edge can be in an
  another tree  at the moment of passing confirmation). 
  
  We conclude from this observation that the root does not send the
  termination message at moment $t$ if there exists an edge  $(v,u)$
  of the tree
  $\Gamma(t,K)$ such that no confirmation has  passed through the edge.
  Indeed, the keys in the roots of trees containing the edge $(v,u)$
  form a non-increasing sequence. Therefore, while the edge  $(v,u)$
  is in the tree
  $\Gamma(t,K)$, this tree also contains all edges of the path from
  $u$ to the root and no confirmation was passed through edges of this
  path. It means that the root does not send the termination message. 

  If the tree $\Gamma(t,K)$ with the root $r$ is not a spanning tree,
  then the following cases are possible.
  \begin{enumerate}
  \item\label{cand-less} $C(t,r)\ls K = K(r)$.
  \item\label{key-geq} Otherwise, $C(t,r) = K$ due to monotonicity of
    the candidates in a node. But it is possible that the tree
    $\Gamma(t,K)$ includes edges whose keys differ from~$K$. 
  \item\label{good-neighbors} $C(t,r) = K$ and the edge key is $K$ for
    each edge in the tree $\Gamma(t,K)$. We assume that the tree is
    not spanning. Thus, there exist ``external links'': between nodes
    in the tree and outside it (the network itself assumed to be
    connected). We call \emph{tree neighbors} the  endpoints of the
    second type.

    Suppose that the following condition holds: there is no neighbor
    of the tree  $\Gamma(t,K)$ such that its candidate is $K$ at some
    earlier moment.  
  \item\label{bad-neighbors} The last case covers the remaining
    option. Namely, there exists a neighbor of the tree  $\Gamma(t,K)$
    such that its candidate is $K$ at some
    earlier moment.  
  \end{enumerate}

\noindent\textbf{Case~\ref{cand-less}.}  The local termination
  condition violates in the root $r$ in this case: the candidate
  differs from a key. Thus the root does not send the termination
  message. 

For all other cases we indicate an edge in the tree 
$\Gamma(t,K)$ such that no confirmation has been passed through the edge.

\noindent\textbf{Case~\ref{key-geq}.}  
  By definition of a digraph  $\Gamma(t)$, the root $r$ has not been
  in any tree  $\Gamma(t',K')$, $t'<t$, $K'\ne K$. It implies that the
  only possible value of the edge key for an edge $(u,r)$  is $K$.

  Thus, there are two edges in the tree $(u,v)$, $(v,w)$ such that the
  first edge key is $K'\ne K$, the second edge key is~$K$, and $K\ls
  K'$. Indeed, both keys passed through the node $v$ up to the moment
  $t$ and inequality $K\gs K'$ contradicts the definition of a digraph
  $\Gamma(t)$.

  Show that no confirmation has passed through the edge $(v,w)$.
  Suppose the opposite: the node $v$ has sent the confirmation message
  to the node~$w$ at a moment $\tau_{K}(v)\leq t'<t$. The local
  termination condition implies  $C(t',u)= C(t',v) = C(t', w) =
  K''$. But  $K''=K$, otherwise the node $v$ has removed this
  confirmation up to the moment~$t$.

  It implies that the key sequence~\eqref{key-seq} in the node $u$
  contains the key $K\ls K'$ and the edge key for the edge $(u,v)$ is
  less than~$K'$. It contradicts the assumption. 

  This contradiction completes the argument for this case.

\noindent\textbf{Case~\ref{good-neighbors}.}  
  Suppose that a confirmation passes through an edge  $(u,v)$. Then
  the node $u$ satisfies the local termination condition. It implies
  that candidates of all neighbors are~$K$. 

  From this observation we conclude that nodes in the
  tree~$\Gamma(t,K)$ having ``external links'' have not sent the
  confirmations. (Recall that in the case into consideration 
  candidates of any neighbor
  of the tree  $\Gamma(t,K)$ differ from $K$ at any moment up to~$t$.)

\noindent\textbf{Case~\ref{bad-neighbors}.}  Recall that in this case 
    there are nodes outside  the tree  $\Gamma(t,K)$
    such that their candidates were $K$ at some
    earlier moments.  All these nodes are in the static tree~$T_K$.  
    Choose among them the neighbor $w$ of the tree  $\Gamma(t,K)$
    that has leaved the dynamic tree with the key $K$ at the earliest
    moment. Let 
    \begin{equation}\label{bad-path}
      w=v_0, v_1, \dots, v_k = r
    \end{equation}
    be the path in the static tree $T_K$ from $w$ to the root
    $r$. Show that no node on the path has sent the confirmation
    through an edge of this path (with the edge key $K$). 
    
    While the node $w$ is in the dynamic tree with the key $K$, the
    conditions of the case~\ref{good-neighbors} hold: there is no neighbor of
    the tree such that a~candidate of the neighbor was $K$.  It
    guarantees that the node $w=v_0$ has not sent confirmation.

    Suppose that confirmations were sent by nodes on the path from $w$
    to $r$ at earlier moments than $t$.  Among the nodes, which sent
    these confirmations, take $v_i$
    --- the nearest to the node $w$. As it was shown above, the node
    $v_i$ differs from $w$. Therefore there is the preceding node
    $v_{i-1}$ on the path.

    If a node $v_{i-1}$ is a child of a node $v_i$ in $\Gamma(t')$,
    then the node $v_i$ has not received the confirmation from
    $v_{i-1}$. 
    
    Otherwise, the candidate $C(t',v_{i-1})$ is lesser than 
    $K$ in the PL order. It implies that either the node $v_i$  does not
    finish communication or it received the message ``I'm not a child''
    from  $v_{i-1}$.

    Both variants  contradict the sending of confirmation by the
    node~$v_i$.

    So, no confirmation has been passed through the edges of the
    path~\eqref{bad-path}. Take the minimal key $K'\lseq K$ that has
    been passed through the node $v_{k-1}$. If $K'=K$, then the edge
    $(v_{k-1}, v_k)= (v_{k-1}, r)$ is in the tree $\Gamma(t,K)$ (see
    Remark~\ref{conserve}) and  no confirmation has been passed
    through the edge. Otherwise, the local termination condition does
    not hold in the root $r$ because the candidate of $v_{k-1}$ is
    lesser than $K$.

    This completes our case analysis. So, we have proved that if the tree
    $\Gamma(t,K)$ is not spanning, then its root does not send the
    termination message.  

    To prove the second statement of the lemma, note that 
    the protocol $A$ terminates at some time $t'$.
    The graph $\Gamma(t')$ is a spanning tree. All candidates $C(t',v)$
    and all participants $P(t',v)$ coincide with the minimal key. Thus all
    nodes finish communication in the protocol~$A$. 
    It means that in the protocol $C$ the
    confirmations are transmitted upward. When the root collects the
    confirmations from all its childs, it sends the termination
    message. The termination message will broadcast downward.
\end{proof}

\begin{proof}[Proof of Theorem \ref{th:p-t-prot}.]
  The running time of the protocol $A$ is  $O(L+D\log L)$ due to
  Theorem~\ref{th:m-t-prot}. After that all nodes have finished the
  communication in the protocols $A$, $B$ and confirmations will broadcast
  through the spanning tree without delays as well as termination
  messages. So, an additional time to finish the combined protocol is 
  $O(d)$, where $d$ is the depth of the spanning tree constructed.

  To bound  the depth of the tree, note that
  all nodes receive the minimal key in time $ O(L+D\log L)$. 
  The depth is not greater than this value. Indeed,
  the moments of receiving the minimal key are different along a
  path from the root to a node. After receiving the minimal key the node
  does not change a parent.

  Thus, the running time of the combined protocol is
  \[
  O(L+D\log L)+ O(d) = O(L+D\log L).
  \]
\end{proof}

\section*{Acknowledgments}

The authors are thankful to anonymous referees for their numerous
valuable comments and suggestions.

\end{document}